\documentclass[11pt, letterpaper, USenglish]{article}

\usepackage[utf8]{inputenc}
\usepackage{comment}
\usepackage{graphicx}
\usepackage{float}
\usepackage{hyperref}
\usepackage{xcolor}
\usepackage{xspace}
\usepackage{tikz}
\usepackage{mathtools}

\usepackage{cleveref}
\hypersetup{hidelinks}

\usepackage{amsthm}
\usepackage{amsfonts}
\usepackage{thm-restate}

\usepackage[margin=1in]{geometry}

\usepackage{algorithm}
\usepackage{algpseudocode}
\usepackage{todonotes}

\newcommand{\E}{\mathbb{E}}

\newcommand{\poly}{\mathrm{poly}}

\newcommand{\parallelrulingset}{\textrm{Parallel 2-Ruling-Set}\xspace}

\newcommand{\mpc}{\textrm{MPC}\xspace}
\newcommand{\clique}{\textrm{Congested Clique}\xspace}

\newcommand{\vsetaside}{\ensuremath{V^*}\xspace}
\newcommand{\lubymis}{\ensuremath{I}\xspace}

\newcommand{\gsampled}{\ensuremath{G_\mathrm{samp}}\xspace}
\newcommand{\vsampled}{\ensuremath{V_\mathrm{samp}}\xspace}

\newcommand{\whp}{w.h.p.\xspace}

\newtheorem{theorem}{Theorem}
\newtheorem{definition}[theorem]{Definition}
\newtheorem{lemma}[theorem]{Lemma}

\newtheorem*{theorem*}{Theorem}

\title{Time and Space Optimal Massively Parallel Algorithm for the 2-Ruling Set Problem}

\author{M\'elanie Cambus\footnote{Aalto University, Finland, \{melanie.cambus, shreyas.pai, jara.uitto\}@aalto.fi. M\'elanie and Shreyas are supported by Academy of Finland, Grant 334238} \and Fabian Kuhn\footnote{University of Freiburg, Germany, kuhn@cs.uni-freiburg.de}  \and  Shreyas Pai\footnotemark[1] \and Jara Uitto\footnotemark[1]}


\date{}

\begin{document}

\maketitle
\begin{abstract}
  In this work, we present a constant-round algorithm for the $2$-ruling set problem in the \clique model.
  As a direct consequence, we obtain a constant round algorithm in the \mpc model with linear space-per-machine and optimal total space.
  Our results improve on the $O(\log \log \log n)$-round algorithm by~[HPS, DISC'14] and the $O(\log \log \Delta)$-round algorithm by~[GGKMR, PODC'18].
  Our techniques can also be applied to the semi-streaming model to obtain an $O(1)$-pass algorithm.

  Our main technical contribution is a novel sampling procedure that returns a small subgraph such that \emph{almost} all nodes in the input graph are adjacent to the sampled subgraph.
  An MIS on the sampled subgraph provides a $2$-ruling set for a large fraction of the input graph.
  As a technical challenge, we must handle the remaining part of the graph, which might still be relatively large.
  We overcome this challenge by showing useful structural properties of the remaining graph and show that running our process twice yields a $2$-ruling set of the original input graph with high probability. 
\end{abstract}

\newpage

\section{Introduction}
In this paper, we design and analyze a parallel algorithm for finding ruling sets. For a graph $G = (V, E)$ (with $|V| = n$ and $|E| = m$) and integer $\beta \ge 1$, a $\beta$-ruling set $S \subseteq V$ is a set of non-adjacent nodes such that for each node $u \in V$, there is a ruling set node $v \in S$ within $\beta$ hops. This is a natural generalization of one of the most fundamental problems in parallel and distributed graph algorithms: Maximal Independent Set (MIS), which corresponds to a $1$-ruling set. Ruling sets are closely related to clustering problems like Metric Facility Location, as fast algorithms for $\beta$-ruling sets imply fast algorithms for $O(\beta)$-approximate metric facility location \cite{Berns2012,inamdar2018}.

Our main contribution is an $O(1)$ round algorithm for $2$-ruling sets in \clique, improving on the $O(\log \log \log n)$ time algorithm by~\cite{Hegeman2014} and $O(\log \log \Delta)$ algorithm by~\cite{ghaffari2018improved}, where $\Delta$ denotes the maximum degree of the input graph. While problems like minimum spanning tree \cite{nowicki2021mst} and $(\Delta + 1)$-coloring \cite{czumajdp20coloring} surprisingly admit constant round solutions in the \clique model, the MIS problem and even $\beta$-ruling set for $\beta = O(1)$ have resisted attempts to obtain constant round algorithms. We make significant progress in this direction by giving the first $O(1)$ round algorithm for $2$-ruling sets in \clique. 

The \clique model~\cite{LotkerPPP2003CongClique}, is a distributed synchronous message-passing model where each node is given its incident edges as an input and the nodes perform all-to-all communication in synchronous rounds.
The crucial limitation is that the size of the messages sent between any pair of nodes, in a single round, is limited to $O(\log n)$ bits, where $n$ is the number of nodes in the input graph. The goal is to minimize the number of required synchronous communication rounds.

As a consequence of our \clique algorithm, we obtain a constant round algorithm for $2$-ruling sets in the Linear Memory \mpc model~\cite{Karloff2010, Goodrich2011, Beame2017} and a constant pass algorithm in the semi-streaming model~\cite{Feigenbaum2005, Feigenbaum2005a, Mut2005}. 
The \clique and \mpc implementations are asymptotically optimal in both time and space, as they use constant rounds and $O(n+m)$ total memory. 
In the semi-streaming model, one typically aims for very few passes, ideally just one. While we show that constant passes are enough to solve $2$-ruling sets with $O(n)$ space, discovering the precise number of passes required is an interesting open question. 
The implementation details and definitions of the different models of computation can be found in \Cref{sec: implementation-main}. Our main results are captured in the following theorem.

\begin{theorem*}[Main Theorem]
    There is a randomized parallel algorithm to the $2$-ruling set problem that can be implemented in $O(1)$ rounds/passes (1) in the \clique model, (2) in the \mpc model with $O(n)$ words of local memory and $O(n+m)$ words of total memory, and (3) in the semi-streaming model with $O(n)$ words of space. The running time guarantee of the algorithm holds with high probability (\whp)\footnote{Following the standard, we say that an event holds with high probability if it holds with probability $1 - 1/n^c$ for a constant $c\geq 1$ we can choose.}.
\end{theorem*}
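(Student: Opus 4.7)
The plan is to give a Congested Clique algorithm that finishes in $O(1)$ rounds and then lean on standard simulation results to transfer it to the other two models. The core building block is the sampling subroutine advertised in the abstract: sample each vertex independently with a carefully chosen probability $p$ so that the induced subgraph \gsampled on \vsampled has $|\vsampled|=O(n)$ but still hits the neighborhood of every ``high-degree'' vertex with high probability. Collect \gsampled at a single machine and compute an MIS $I$ on it locally (essentially zero additional CC rounds, since the whole subgraph fits on one machine). Every node in \vsampled is then $1$-ruled by $I$, every non-sampled node with at least one neighbor in \vsampled is $2$-ruled by $I$, and the only worry is the residual set $R$ of non-sampled nodes with no sampled neighbor.

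For the residual set, observe first that $R$ is disjoint from $I$ and none of its members is adjacent to $I$, so running a second, independent sampling round on $G[R]$ and taking the union of the two MIS outputs will automatically produce an independent set --- no extra ``avoid neighbors of $I$'' filter is needed beyond what already holds. To show that the second round finishes the job, the key is a structural lemma about $R$: because every $v \in R$ independently failed the first test, the degree profile of $G[R]$ is tightly controlled, and one can prove that a single repetition of the sampler with appropriately re-tuned probabilities covers every remaining vertex with probability $1-1/\poly(n)$. Two applications of the sampler therefore suffice, and a union bound over all $n$ vertices finishes correctness.

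Given the Congested Clique algorithm, the implementations in the other two models are routine. In the \mpc model with $O(n)$ words per machine and $O(n+m)$ total space, each of the constantly many CC rounds is simulated by $O(1)$ MPC rounds using standard sort-and-route primitives, and the step of gathering \gsampled on one machine fits in local memory since $|\vsampled|=O(n)$. In the semi-streaming model the sampled set can be materialized in a single pass, the local MIS is computed in memory, and a second pass identifies the residual set and its sampled neighbors for the second stage; the total space is $O(n)$ words and the number of passes is $O(1)$.

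The main obstacle I expect is the middle paragraph: proving that a single additional sampling round suffices, rather than the $\Theta(\log\log\Delta)$ or $\Theta(\log\log\log n)$ repetitions used by the prior work of~\cite{ghaffari2018improved} and~\cite{Hegeman2014}. A naive worry is that $R$ could contain a vertex whose $G$-degree is huge but whose neighbors all lie in $R$ itself, which would block any concentration bound on the second round. Ruling this out requires exploiting that every $v\in R$ is a witness to an independent ``bad'' event of probability $(1-p)^{\deg(v)}$, so the joint law of $G[R]$ is far from adversarial --- making this intuition precise, and choosing the two sampling rates in tandem so that the argument closes, is the technical heart of the paper.
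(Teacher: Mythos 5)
There is a genuine gap, and it sits exactly where you flagged it: the middle paragraph is not a proof sketch but a restatement of the theorem's difficulty. Three specific points fail. First, your sampling step is underspecified in a way that matters: the paper samples each node $u$ with the \emph{degree-dependent} probability $1/\sqrt{\deg(u)}$, and the property it needs is not $|\vsampled|=O(n)$ (which is nearly automatic) but that $G[\vsampled]$ has $O(n)$ \emph{edges} so it can be gathered on one machine; proving that edge bound already requires handling dependencies between edges sharing an endpoint (the paper uses the bounded-differences inequality, after first reducing $\Delta$ to $n^{\alpha}$). A single uniform rate $p$ cannot simultaneously keep the sampled subgraph linear and hit the neighborhoods of low- and high-degree vertices. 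Second, your claim that one re-tuned repetition ``covers every remaining vertex with probability $1-1/\poly(n)$,'' followed by a union bound over all $n$ vertices, is not achievable: a node of degree $d$ is covered only with probability $1-1/\poly(d)$, which for small $d$ is nowhere near $1-1/\poly(n)$. The paper explicitly does \emph{not} achieve full coverage after two rounds; instead it proves that after two rounds the uncovered subgraph has $O(n)$ edges, and then computes one final MIS directly on that residual. Getting there requires machinery you don't have: the good/bad classification, the set-aside set \vsetaside (whose linear size needs a distance-$7$ coloring argument precisely because the ``bad events'' of nearby nodes are \emph{not} independent, contrary to your last paragraph), and the counting argument showing $|\overline{B}_d|\lesssim n\log^5 d/\sqrt{d}$ after one round and $\lesssim n/d^{1.1}$ after two.

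Third, a smaller but real issue: taking the union of the two MIS outputs is not automatically an independent set. A node of $R$ selected in the second round can be adjacent to a first-round MIS node's \emph{neighbor's neighbor} without issue, but it can also be adjacent to a first-round MIS node itself if it merely lacked a \emph{sampled} neighbor yet has a neighbor that joined the MIS in the set-aside phase; the paper avoids this by always removing $\mathcal{I}\cup N(\mathcal{I})\cup N(N(\mathcal{I}))$ before recursing and by computing each subsequent MIS on the graph induced by genuinely uncovered nodes. The model transfers in your last paragraph are fine and match the paper, but they rest on the $O(n)$-edge bounds for every intermediate subgraph, which is the part the proposal leaves open.
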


\subsection{Previous Works on Ruling Sets and MIS}

Recall that a $2$-ruling set $S$ is a set of non-adjacent nodes, such that each node in the input graph has a node from $S$ within its $2$-hop neighborhood.
This is a strictly looser requirement than the one of an MIS and hence, an MIS algorithm directly implies a $2$-ruling set algorithm.
The classic algorithms by Luby~\cite{luby86} and Alon, Babai, and Itai~\cite{alon86} yield $O(\log n)$ time algorithms for ruling sets.
In \clique and \mpc, this was later improved to $\widetilde{O}(\sqrt{\log \Delta})$~\cite{Ghaffari2017} and the current state-of-the-art for MIS is $O(\log \log \Delta)$ rounds~\cite{ghaffari2018improved}.

When focusing on the $2$-ruling set problem, roughly a decade ago,~\cite{Berns2012} gave an expected $O(\log \log n)$ time algorithm and~\cite{Hegeman2014} gave an $O(\log \log n)$ time algorithm \whp\ in the \clique model.
Combining the result of~\cite{Hegeman2014} with the $O(\log \log \Delta)$ algorithm for MIS, this was improved to $O(\log \log \log n)$ rounds (\whp).
The $2$-ruling set problem can be solved \emph{deterministically} in $O(\log \log n)$ rounds in the \clique model \cite{Pai2022}. On the other hand, \cite{assadi2022rounds} shows that $\Omega(\log \log n)$ rounds are required to compute an MIS in the Broadcast \clique model.

The $O(\log \log \Delta)$ algorithm by~\cite{ghaffari2018improved} carries over to the semi-streaming model.
We note that there is an older variant of this algorithm that also yields an $O(\log \log \Delta)$-pass randomized \emph{greedy} MIS, which is given by picking a random permutation over the nodes~\cite{ahn2015correlation}. Then the permutation is iterated over and, whenever possible, the current node is added to the MIS. However, this approach requires a polylogarithmic overhead in space and hence, does not work in the \clique model.
While it is known that computing an MIS in a \emph{single-pass} of the stream requires $\Omega(n^2)$ memory \cite{cormode2019MISstreaminglb}, there has been progress towards designing single-pass semi-streaming algorithms for $2$-ruling sets, but the current approaches require significantly larger than $O(n)$ memory: \cite{konrad2019complexity} shows that it can be done in $O(n\sqrt{n})$ memory and this was improved to $O(n^{4/3})$ by \cite{AssadiDDISC21}. The $\Omega(n^2)$-space lower bound for MIS was generalized to $(\alpha, \alpha-1)$-ruling sets by \cite{AssadiDDISC21}. 
An $(\alpha, \beta)$-ruling set is a set $S$ of nodes such that nodes within $S$ are distance at least $\alpha$ apart and every node in $V \setminus S$ has a node in $S$ within distance $\beta$. An MIS is a $(2,1)$-ruling set, and a $2$-ruling set is a $(2,2)$-ruling set, in this paper we drop the $\alpha$ parameter for sake of convenience as it is always $2$.
A single-pass semi-streaming algorithm for $2$-ruling set has not been ruled out.

\subsection{A High-Level Technical Overview of Our Algorithm}

Our strategy is to compute an MIS iteratively on subgraphs of size $O(n)$ until all nodes are covered. We begin with a sampling process where each node $u$ is sampled independently with probability $1/\sqrt{\deg(u)}$. Call the set of sampled vertices \vsampled. 
The intuition behind this sampling probability is to maximize the probability that each node has a sampled neighbor while ensuring $O(n)$ edges in the sampled subgraph $G[\vsampled]$.
To see why the sampled subgraph is small, assume we have a $d$-regular graph. Hence, each node is independently sampled with probability $1/\sqrt{d}$. For an edge $\{u, v\}$ to be sampled, both $u$ and $v$ must be sampled together. Therefore, an edge exists in the sampled graph with probability $1/d$, and we can say that there are at most $O(n)$ edges in the sampled graph in expectation (since the total number of edges is $nd$). 

We show that the same expectation holds for general graphs by orienting the edges based on the degree of their end points and counting the number of outgoing sampled edges per node. In order to show that $G[\vsampled]$ has $O(n)$ edges \whp, we face a technical challenge that the random choices for all edges are not independent. 
The dependencies disallow the use of standard Chernoff bounds, but we overcome the challenge by using the method of bounded differences to show concentration. For convenience, we state the concentration bounds in \Cref{sec: concentration}.

We classify nodes in the graph as either \emph{good} or \emph{bad}, based on how their initial degree compares with the sum of sampling probabilities of their neighbors:
\begin{definition}[Good and Bad Nodes]\label{def:good-bad-nodes}
     A node $u$ is good if $\sum_{v \in N(u)} 1/\sqrt{\deg(v)} \geq \gamma \log \deg(u)$ (for a constant $\gamma$). If a node is not good, it is bad.
\end{definition}

Note that this definition is based entirely on the graph structure (i.e., degrees of all nodes).
Furthermore, it is straightforward to detect which nodes are good and bad in $O(1)$ rounds.

By definition, good nodes are expected to have many sampled neighbors, so they have a relatively high probability of being covered by a node in \vsampled. So we put the good nodes that do not have any sampled neighbors into a set \vsetaside and process it later. We call \vsetaside the nodes that are \emph{set aside}. Therefore, the remaining graph of uncovered nodes only consists of bad nodes. In addition to all the good nodes without a sampled neighbor, we also add some bad nodes that are very likely to be covered (but weren't) into \vsetaside. We use the fact that all nodes in \vsetaside have a good chance of being covered to show that the graph $G[\vsetaside]$ has $O(n)$ edges \whp The main difficulty in proving this claim is that nodes are not put independently in \vsetaside, and only nodes that are a certain distance apart are independent of each other. Moreover, the probability of a single node in \vsetaside with degree $d$ being covered is only $1 - 1/\poly(d)$ which is not enough to do a simple union bound over all nodes. We overcome this challenge by showing that \vsetaside can be partitioned into large sets of far apart nodes, which allows us to (1) boost the probability of each set being covered, and (2) apply a union bound over all sets.

Finally, we do an intricate counting argument that bounds the number of uncovered bad nodes. We do this by counting the number of edges that can exist between bad nodes and higher degree nodes, and show that the number of bad nodes with degree $d$ is roughly bounded by $n/\sqrt{d}$. Note that this bound is not enough to show that the remaining graph is small. So we run the entire sampling and setting nodes aside process again to get the bound to be roughly $n/d$, which makes it straightforward to show that the remaining graph has $O(n)$ edges.

\section{\parallelrulingset}{\label{sec: ruling}}
In this section, we present a parallel algorithm for finding a $2$-ruling set. For technical convenience, we assume that the maximum degree of the input graph is bounded by $n^{\alpha}$ for some constant $\alpha < 1/8$, and in \Cref{sec:degree-reduction} we show how to remove this assumption. When computing an MIS on the sampled vertices, we use a version of the Luby's algorithm that works as follows: in an iteration, each node picks a real number independently and uniformly at random in $\left[0, 1\right]$, local minima join the MIS, and these two steps are repeated until all nodes are either in the MIS or have a neighbor in the MIS.  It is well known that this algorithm terminates in $O(\log n)$ iterations \whp (see for example \cite{ghaffari2016improved}). 

Recalling the definition of good and bad nodes from \Cref{def:good-bad-nodes}, let $B_{d}$ be the set of bad nodes in $G$ with (initial) degree in the range $[d, 2d)$.
We describe the \parallelrulingset algorithm in \Cref{alg:ruling-set}. 

\begin{algorithm}[tbh]
\caption{\parallelrulingset}\label{alg:ruling-set}
\begin{algorithmic}[1]
    \Statex \textbf{Input}: Graph $G = (V, E)$, with $\Delta \le n^{\alpha}$ $(\alpha < 1/8)$. Each node $v \in V$ knows its degree $\deg(v)$ in $G$. 
    \State Each node $v \in G$ is independently sampled into $\vsampled$ with probability $1/\sqrt{\deg(v)}$. \label{line:2rs-sample}
    \State We put good nodes that do not have any neighbors in \vsampled in the set \vsetaside.\label{line:2rs-setaside-1}
    \State Compute an MIS $\lubymis$ on $\gsampled = G[\vsampled]$ using Luby's algorithm first on the bad nodes and then on the rest of the nodes. \label{line:2rs-luby}
    \State If any uncovered $u \in B_d$ has a neighbor $v$ that has at least $c \sqrt{d} \log^5(d)$ neighbors in $B_d$, we put $u$ in $\vsetaside$. \Comment{$B_{d}$ is the set of bad nodes in $G$ with (initial) degree in $[d, 2d)$}\label{line:2rs-setaside-2}
    \State Compute an MIS on $G[\vsetaside]$.\label{line:2rs-setaside-mis}
    \State Run Lines~\ref{line:2rs-sample} to~\ref{line:2rs-setaside-mis} on $G' =  G[V \setminus (\mathcal{I} \cup  N(\mathcal{I}) \cup N(N(\mathcal{I})))]$ where $\mathcal{I}$ is the set of MIS nodes. \label{line:2rs-second-time}
    \State Compute an MIS on the graph induced by the uncovered nodes. \label{line:2rs-mis-residual}
    \Statex \textbf{Output}: The set of nodes that joined an MIS during the algorithm.
\end{algorithmic}
\end{algorithm}

A node is considered covered if and only if it is at most $2$-hops away from a node in the MIS, and two adjacent nodes can never join the MIS. Since the last step of the algorithm computes an MIS on the uncovered nodes, it is guaranteed to output a valid $2$-ruling set, hence proving the following theorem.

\begin{restatable}{theorem}{2rulingsetthm}
\label{thm: 2rulingset}
 The \parallelrulingset algorithm (\Cref{alg:ruling-set}) outputs a valid $2$-ruling set. 
\end{restatable}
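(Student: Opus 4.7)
The plan is to establish the two defining properties of a $2$-ruling set separately: (i) that the returned set $S$ is independent, and (ii) that every node of $V$ lies within $2$ hops of some node in $S$. Both are essentially structural consequences of how the algorithm is assembled, so the proof reduces to a careful case analysis rather than any delicate probabilistic argument.

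For coverage (ii), I would observe that the final step (Line~\ref{line:2rs-mis-residual}) computes an MIS on the subgraph induced by the nodes that are still uncovered after Line~\ref{line:2rs-second-time}. By the definition of an MIS, every such uncovered node either joins $S$ in this step or is adjacent to a node that joins; in either case it ends up at distance at most $1$ from $S$, hence certainly at most $2$. All remaining nodes were already covered, in the sense fixed in the paragraph preceding the theorem (within $2$ hops of a previously chosen MIS node). Coverage therefore holds for all of $V$.

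For independence (i), I would go through the four (sub)MIS computations performed by the algorithm and argue that no newly selected node is adjacent to a previously selected one. Within each individual call to Luby's algorithm the output is independent by construction. Across calls, each subsequent MIS is invoked on a vertex set that excludes every neighbor of every previously selected node: (a) in Line~\ref{line:2rs-setaside-mis}, \vsetaside consists of good nodes with no neighbor in \vsampled (hence no neighbor in $\lubymis \subseteq \vsampled$) together with the uncovered bad nodes put aside in Line~\ref{line:2rs-setaside-2}, which, being uncovered, are in particular not adjacent to $\lubymis$; (b) the rerun in Line~\ref{line:2rs-second-time} is explicitly performed on $G[V \setminus (\mathcal{I} \cup N(\mathcal{I}) \cup N(N(\mathcal{I})))]$, which by construction contains no neighbor of $\mathcal{I}$; and (c) the residual MIS of Line~\ref{line:2rs-mis-residual} is taken on uncovered nodes, which are at distance strictly greater than $2$ from any earlier MIS node and therefore not adjacent to any of them. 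Combining these observations shows that the union $S$ of all MIS nodes chosen during the algorithm is an independent set.

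The only mild subtlety I expect is making sure to use a consistent notion of \emph{covered} across the entire execution and to verify that the set $\mathcal{I}$ referenced in Line~\ref{line:2rs-second-time} really refers to all MIS nodes chosen so far (i.e., the union of the outputs of Line~\ref{line:2rs-luby} and Line~\ref{line:2rs-setaside-mis}), not just the first one. Once that convention is pinned down, the correctness claim reduces to the bookkeeping above and the theorem follows.
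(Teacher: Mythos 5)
Your proposal is correct and follows essentially the same reasoning as the paper, which proves the theorem in two sentences just before its statement: covered means within $2$ hops of an MIS node, adjacent nodes never both join an MIS, and the final MIS on the uncovered nodes guarantees full coverage. Your write-up simply makes the independence case analysis explicit, including the (correct) reading that $\mathcal{I}$ in Line~\ref{line:2rs-second-time} denotes all MIS nodes selected so far.
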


\subsection{Structural Properties of the Subgraphs}\label{subsection:2rs-small-graphs}

We will now prove key structural properties of the different subgraphs on which we compute an MIS in \Cref{alg:ruling-set}. This will allow for fast implementation of this algorithm in linear memory \mpc, \clique, and semi-streaming models (see \Cref{sec: implementation-main}).
We first prove the fact that the sampled graph has a linear number of edges \whp

\begin{lemma}\label{lemma: 2rs-gsamp-small}
 The sampled graph $\gsampled$ has $O(n)$ edges \whp
\end{lemma}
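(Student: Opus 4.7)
\textbf{Proof plan for \Cref{lemma: 2rs-gsamp-small}.} My plan is to first establish the expectation bound $\E[|E(\gsampled)|] \leq n$ via an edge-orientation argument, and then upgrade the expectation bound to a high-probability bound using the method of bounded differences.

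For the expectation, observe that by independence each edge $\{u,v\} \in E$ appears in $\gsampled$ with probability exactly $1/\sqrt{\deg(u)\deg(v)}$. I would orient every edge from its lower-degree endpoint to its higher-degree endpoint (breaking ties arbitrarily). For an edge oriented from $u$ to $v$, we have $\deg(v) \geq \deg(u)$, so $1/\sqrt{\deg(u)\deg(v)} \leq 1/\deg(u)$. Since each vertex $u$ has at most $\deg(u)$ outgoing edges, its total contribution to $\sum_{\{u,v\}\in E} 1/\sqrt{\deg(u)\deg(v)}$ is at most $1$. Summing over all $n$ vertices gives $\E[|E(\gsampled)|] \leq n$.

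For concentration, a naive Chernoff bound fails because edge indicators are not independent (two edges sharing a vertex both depend on whether that vertex is sampled). However, the underlying vertex-sampling indicators $\{X_v : v \in V\}$, where $X_v = \mathbf{1}[v \in \vsampled]$, are mutually independent, and $|E(\gsampled)|$ is a function $f(X)$ of them. Flipping a single coordinate $X_v$ changes $f$ by at most $\deg(v)$, since only edges incident to $v$ are affected. I would apply McDiarmid's inequality with Lipschitz coefficients $c_v = \deg(v)$. Using the handshake identity and $\deg(v) \leq \Delta \leq n^{\alpha}$, we obtain
\[
\sum_v c_v^2 = \sum_v \deg(v)^2 \;\leq\; \Delta \sum_v \deg(v) \;=\; 2m\Delta \;\leq\; n\Delta^2 \;\leq\; n^{1+2\alpha}.
\]
Since $\E[f] \leq n$, applying McDiarmid with deviation $t = n$ yields
\[
\Pr\bigl[\,|E(\gsampled)| \geq 2n\,\bigr] \;\leq\; \exp\!\left(-\frac{2n^2}{n^{1+2\alpha}}\right) \;=\; \exp\bigl(-2n^{1-2\alpha}\bigr),
\]
and because $\alpha < 1/8 < 1/2$, the exponent grows polynomially in $n$, giving a bound of $1/n^{\omega(1)}$. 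This concludes that $|E(\gsampled)| = O(n)$ \whp

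The main obstacle is precisely the correlation between edges that share a vertex. The key idea that resolves it is to parameterize the randomness by the \emph{vertex} indicators (which are independent) rather than the edge indicators, and then exploit the assumed degree bound $\Delta \leq n^{\alpha}$ to keep each Lipschitz coefficient small enough that McDiarmid's inequality delivers a high-probability bound. Everything else reduces to the edge-orientation trick for the expectation and routine algebra.
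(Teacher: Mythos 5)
Your proof is correct and follows essentially the same route as the paper: the edge-orientation argument for $\E[|E(\gsampled)|] \le n$, followed by the bounded differences inequality over the independent vertex-sampling indicators with $\sum_v c_v^2 \le n^{1+2\alpha}$ and deviation $t = n$. Your choice of per-vertex Lipschitz coefficients $c_v = \deg(v)$ is a slight refinement over the paper's uniform $c_v = \Delta = n^{\alpha}$, but it leads to the same final bound.
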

\begin{proof}
Let $X$ be the random variable denoting the number of edges in \gsampled. 
Let $X_u$ be the indicator random variable for the event that $u$ is sampled in \vsampled and let $Y_e$ be the indicator random variable for the event that edge $e$ belongs to \gsampled. 
We orient all edges from the end point with lower initial degree to higher initial degree. 
By the degree sum lemma, we have $X = \sum_{u\in V} \sum_{e \in \mathrm{Out}(u)} Y_e$, where $\mathrm{Out}(u)$ is defined as the set of outgoing edges of $u$.

Consider an oriented edge $e = (u, v)$ with $\deg(u) \le \deg(v)$. We have that both $u$ and $v$ are sampled with probability at most $1/\sqrt{\deg(u)}$, so the probability that $e$ is in \gsampled is at most $1/\deg(u)$. Therefore, $\E[Y_e] \le 1/\deg(u)$, and $\E[X] \le n$.

We can interpret $X$ as a function of the random variables $X_u, u \in V$, where changing one coordinate changes $X$ by at most $\Delta = n^{\alpha}$. 
Therefore, $X$ follows the bounded differences property with bounds $c_u = n^{\alpha}$ for all $u \in V$. 
Since the $X_u$'s are independent of each other, we can use \Cref{lemma:mcdiarmid} with $\mu = t = n$ to say that:
$\Pr[X > 2n] \le 2 \exp(n^2/n^{1 + 2\alpha}) \le 2\exp(n^{1-2\alpha}) \le 1/\poly(n).$
\end{proof}

As we observed earlier, good nodes expect to see a lot of sampled neighbors, therefore it is very unlikely that a good node has no sampled neighbors. The following lemma formalizes this intuition.

\begin{lemma} \label{lemma: good-set-aside-prob}
In Line~\ref{line:2rs-setaside-1} of \Cref{alg:ruling-set}, a good node $u$ with $\deg(u) = d$ is added to \vsetaside with probability $1/\poly(d)$. This event is independent of the randomness (for sampling into \vsampled) of nodes more than distance $1$ from $u$ in $G$.
\end{lemma}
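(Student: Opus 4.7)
The plan is to directly bound the failure probability using independence of the sampling procedure. The event ``$u$ is added to \vsetaside in Line~\ref{line:2rs-setaside-1}'' is the conjunction of two things: (i) $u$ is good, which is a deterministic statement depending only on the degree sequence, and (ii) no neighbor of $u$ lies in \vsampled. So I only need to control (ii) and then read off the locality claim from the form of the event.

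First, I would compute the probability of (ii) directly. Each neighbor $v \in N(u)$ is sampled independently into \vsampled with probability $1/\sqrt{\deg(v)}$, so the probability that none of them is sampled is
\[
\prod_{v \in N(u)} \left(1 - \frac{1}{\sqrt{\deg(v)}}\right) \le \exp\!\left(-\sum_{v \in N(u)} \frac{1}{\sqrt{\deg(v)}}\right),
\]
using $1 - x \le e^{-x}$. Since $u$ is good with $\deg(u) = d$, the exponent is at most $-\gamma \log d$ by \Cref{def:good-bad-nodes}, which gives an upper bound of $d^{-\gamma} = 1/\poly(d)$ for the constant $\gamma$ fixed in the definition. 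That handles the probability bound.

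For the independence statement, I would just observe that the indicator of the event in question is a function of the sampling coin flips $\{X_v : v \in N(u)\}$ of the nodes at distance exactly $1$ from $u$ (the coin flip $X_u$ of $u$ itself is not even used, since Line~\ref{line:2rs-setaside-1} only inspects whether $u$ has a \emph{neighbor} in \vsampled). Therefore the event is independent of the sampling randomness at every node at distance greater than $1$ from $u$, which is exactly the claim.

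I don't anticipate any real obstacle here: the bound comes straight from the definition of a good node and the independence claim is a structural observation about which random variables enter the event. The only mild subtlety is to be explicit that goodness is a property of the (deterministic) degree sequence, so conditioning on it does not interact with the randomness.
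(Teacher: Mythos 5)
Your proof is correct, and it takes a slightly different (and in fact more elementary) route than the paper's. The paper bounds $\E[|N(u)\cap \vsampled|] \ge \gamma\log d$ using goodness and then invokes the Chernoff bound (\Cref{lemma:ChernoffBound}) for the lower tail at $0$, obtaining $\Pr[|N(u)\cap\vsampled|=0] < 1/d^{\gamma/2}$. You instead exploit the fact that the samplings are mutually independent to write the probability that no neighbor is sampled \emph{exactly} as $\prod_{v\in N(u)}(1-1/\sqrt{\deg(v)})$, and then apply $1-x\le e^{-x}$ to get the bound $\exp(-\sum_{v\in N(u)}1/\sqrt{\deg(v)}) \le d^{-\gamma}$. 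Your version avoids the concentration inequality entirely and yields a marginally better exponent ($\gamma$ versus $\gamma/2$); both are of course $1/\poly(d)$, which is all the statement needs. Your treatment of the independence claim matches the paper's (the event is a function only of the sampling coins of $N(u)$, hence independent of the randomness of nodes at distance greater than $1$), and your observation that goodness is a deterministic property of the degree sequence is a correct and worthwhile point of care.
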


\begin{proof}
Since $u$ is good, the sum of sampling probabilities of its neighbors is at least $\gamma\cdot \log d$. So we can bound the expected number of neighbors in \vsampled as $\E\left[|N(u)\cap \vsampled|\right] \geq \gamma\cdot \log d$.
Since the sampling is done independently for each node, we can use Chernoff bound (\Cref{lemma:ChernoffBound}) to compute the probability that no neighbor of $u$ is in \vsampled. We get that $\Pr\left[ |N(u)\cap \vsampled | = 0 \right] < 1/d^{\gamma/2}$.
Hence, $u$ is added to  \vsetaside with probability at most $1/\poly(d)$. This event only depends on the randomness of $u$ and its neighbors in $G$, therefore it is independent of the randomness of nodes at distance more than $1$ from $u$.
\end{proof}

On the other hand, bad nodes expect to have few sampled neighbors. So a sampled bad node will have a good probability of being a local minimum in the first iteration of Luby's algorithm. Therefore, nodes having many bad neighbors are very likely to have one such bad neighbor join the MIS, and hence all such bad neighbors are $2$-hop covered.

\begin{lemma}\label{lemma: bad-set-aside-prob}
In Line~\ref{line:2rs-setaside-2} of \Cref{alg:ruling-set}, each node $u \in B_d$ is added to \vsetaside with probability at most $1/\poly(d)$. This happens independently of the randomness (for sampling into \vsampled and for the first Luby round when computing \lubymis) of nodes more than distance $3$ from $u$ in $G$. 
\end{lemma}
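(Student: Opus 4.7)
Suppose the hypothesis of Line~\ref{line:2rs-setaside-2} holds for a fixed $u \in B_d$: there is a neighbor $v \in N(u)$ with $|W| \geq c\sqrt{d}\log^5 d$, where $W := N(v) \cap B_d$. I will show that, with probability $1 - 1/\poly(d)$ over the sampling into \vsampled and the first Luby round on bad nodes, at least one node of $W$ joins the MIS in that round; any such node is within distance $2$ of $u$ and therefore covers $u$, so $u$ is not added to \vsetaside. The whole argument lives inside the distance-$3$ ball around $u$ in $G$, since $W \subseteq N(N(u))$ and the relevant Luby comparisons involve only $H$-neighbors of $W$, which are within distance $3$ of $u$; this yields the locality claim immediately.

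First I verify that $W' := W \cap \vsampled$ has size $\Omega(\log^5 d)$ with probability $1 - 1/\poly(d)$. Each $w \in W$ has $\deg(w) \in [d,2d)$ and so is sampled independently with probability $\geq 1/\sqrt{2d}$, so $\E[|W'|] = \Omega(\log^5 d)$, and \Cref{lemma:ChernoffBound} gives the concentration. Next I bound the degrees in the bad sampled subgraph $H := \gsampled[B]$, where $B$ is the set of all bad nodes of $G$. For every $w \in W'$, the badness of $w$ yields $\E[|N_H(w)|] \leq \gamma\log\deg(w) = O(\log d)$, and a Chernoff tail together with a union bound over the structural set $N(W) \cap B$ give $|N_H(x)| \leq C\log d$ for every relevant $x$ with probability $1-1/\poly(d)$, for an appropriate constant $C$.

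The core step is to decouple the local-minimum events of many candidates. From the degree bound above, the $2$-neighborhood in $H$ of any $w \in W'$ has size $O(\log^2 d)$, so a greedy procedure extracts a subset $W'' \subseteq W'$ of size $\Omega(\log^3 d)$ in which all pairs are at $H$-distance at least $3$. For any two $w, w' \in W''$ the sets $\{w\} \cup N_H(w)$ and $\{w'\} \cup N_H(w')$ are then disjoint, so conditioning on the sampling (which determines $H$ and hence $W''$), the events ``$w$ attains the smallest first-round Luby value in $\{w\} \cup N_H(w)$'' are mutually independent across $w \in W''$, each with conditional probability at least $1/(C\log d + 1)$. The probability that no $w \in W''$ is a local minimum in $H$ is therefore at most $(1 - 1/(C\log d + 1))^{|W''|} \leq \exp(-\Omega(\log^2 d)) \leq 1/\poly(d)$, and combining with the concentration failures of the previous paragraph completes the bound on $\Pr[u \in \vsetaside]$.

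The main obstacle is obtaining sufficient independence among the many candidates: directly extracting a distance-$3$ subset of $W$ inside $G$ is hopeless, because each node of $W$ can have up to $2d$ neighbors in $G$ and a greedy selection would shrink $W$ by a factor $\Theta(d^2)$, leaving essentially nothing. The workaround is to work inside the much sparser graph $H$, where bad nodes have only $O(\log d)$ neighbors with high probability; this turns the essentially useless per-candidate local-minimum probability $\Theta(1/\log d)$ into a $1-1/\poly(d)$ high-probability event via many independent trials.
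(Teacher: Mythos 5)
Your overall strategy is the same as the paper's: restrict attention to the sampled bad subgraph, argue it is locally sparse, greedily extract many candidates pairwise at distance at least $3$ there, and treat their local-minimum events as independent trials. But there is a genuine gap in the sparsity step, where you assert that a Chernoff bound plus a union bound gives $|N_H(x)| \le C\log d$ for \emph{every} $x \in N(W)\cap B$. Badness of $x$ only gives $\E[|N(x)\cap\vsampled|] \le \gamma\log\deg(x)$, and $\deg(x)$ is not bounded in terms of $d$ --- it can be as large as $n^{\alpha}$ --- so the expected $H$-degree of an intermediate node can be $\Theta(\log n)$ rather than $O(\log d)$. Even the weaker quantity your greedy extraction actually needs, $|N_H(x)\cap W'|$, has expectation roughly $|N(x)\cap W|/\sqrt{d}$, which exceeds any fixed power of $\log d$ once $x$ has more than $\sqrt{d}\,\mathrm{polylog}(d)$ neighbors inside $W$. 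Without controlling this, a single selected candidate can eliminate far more than $O(\log^2 d)$ others, and the extraction of $\Omega(\log^3 d)$ independent candidates from $|W'| = \Omega(\log^5 d)$ collapses.

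The paper closes exactly this hole with a case split that your write-up is missing. First it truncates $W$ to a subset $A_u$ of size exactly $c\sqrt{d}\log^5 d$, so that $|N(A_u)| = O(d^{3/2}\log^5 d)$ and the union bound ranges over only $\poly(d)$ nodes (your union bound over all of $N(W)$, with $|W|$ possibly as large as $\deg(v)$, is also too large for a $1/\poly(d)$ failure target). Then: a node of $N(A_u)$ with at most $O(\sqrt{d}\log^2 d)$ neighbors in $A_u$ has at most $O(\log^2 d)$ sampled neighbors in $A_u$ by Chernoff; a node with more neighbors in $A_u$ is either good, and hence excluded from the bad-nodes-first Luby round, or bad with $\Omega(\log^2 d)$ expected sampled neighbors, which forces $\deg = \exp(\Omega(\log^2 d))$ and makes it sampled with probability $\ll 1/\poly(d)$, so \whp it is absent from $H$ entirely. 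This yields a $2$-hop bound of $O(\log^3 d)$ (not your $O(\log^2 d)$) and $\Omega(\log^2 d)$ independent candidates, which still suffices for the $1 - 1/\poly(d)$ conclusion. Your locality claim and the final independent-trials computation are fine once this step is repaired.
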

\begin{proof}
Recall that $u$ is added to \vsetaside if there is a node $v\in N(u)$ such that $v$ has more than $c \sqrt{d}\cdot \log^5 d$ neighbors in $B_d$. Let $v$ be an arbitrary such node and let $A_u$ be a subset of $N(v) \cap B_d$ such that $|A_u| = c \sqrt{d}
\log^5 d$ and $u\in A_u$. 
We will show that at least one node of $A_u$ joins the MIS in Line~\ref{line:2rs-luby} of \Cref{alg:ruling-set} in the first Luby round with probability at least $1 - 1/\poly(d)$.

A node $w \in A_u$ joins \lubymis in the first Luby round iff $w$ is sampled and if $w$ has the smallest random number (in the first Luby round) among all its sampled neighbors. 
Note that if one node of $A_u$ joins the MIS only depends on the randomness of the bad nodes in $A_u \cup N(A_u)$, which are a subset of the $2$-hop neighborhood of $v$ and thus of the 3-hop neighborhood of $u$.

First, note that the number of nodes in $A_u \cup N(A_u)$ is at most $O(d^{3/2} \log^5(d))$ because $A_u \subseteq B_d$, so every node has at most $2d$ neighbors in $N(A_u)$. 
 
Let $S_u$ be the set of sampled nodes in $A_u$. 
Every node in $N(A_u)$ with at most $O(\sqrt{d} \log^2 d)$ neighbors in $A_u$ has at most $O(\log^2 d)$ neighbors in $S_u$ with probability at least $1-1/poly(d)$ (by using Chernoff bound \Cref{lemma:ChernoffBound} and then union bound over all such nodes in $N(A_u)$). 
On the other hand, every node $w\in N(A_u)$ with $\Omega(\sqrt{d} \log^2 d)$ neighbors in $A_u$ has $\Omega(\log^2 d)$ neighbors in $S_u$ in expectation. 
If $w$ is a good node, it does not participate in the first Luby round carried out by the nodes in $A_u$, and if $w$ is a bad node, the expected number of sampled neighbors of $w$ is at most $\gamma \log(\deg(w))$ and we therefore have $\log(\deg(w)) = \Omega(\log^2 d)$ and thus $\deg(w) = \exp(\Omega(\log^2 d))$. 
The probability that $w$ is sampled is therefore $\ll 1/poly(d)$. 

With probability $1 - 1/\poly(d)$, all the sampled bad nodes in $N(A_u)$ therefore have at most $O(\log^2 d)$ sampled neighbors in $A_u$. Moreover, all the sampled nodes in $A_u$ have at most $O(\log d)$ overall sampled neighbors, since $A_u$ is a subset of $B_d$, it has at most $\gamma \log 2d$ sampled neighbors in expectation. Using a Chernoff bound (\Cref{lemma:ChernoffBound}) gives us that each node in $A_u$ has $O(\log d)$ sampled neighbors with probability $1-1/poly(d)$. 
In the following, we condition on this event happening. 

Consider the graph $G_S$ induced by the sampled nodes in $A_u \cup N(A_u)$. 
For any two nodes $x, y \in S_u$ that are at distance at least $3$ in $G_S$, the events that $x$ and $y$ join the MIS \lubymis in the first Luby step are independent. 
For every node $x \in S_u$, there are at most $O(\log^3 d)$ other nodes in $S_u$ at distance at most 2 in $G_S$ (at most $O(\log d)$ direct neighbors and because the direct neighbors can be in $N(A_u)$ at most $O(\log^3 d)$ $2$-hop neighbors). 
By greedily picking nodes in $S_u$, we can therefore find a set of size $\Omega(|S_u|/\log^3 d)$ of nodes in $S_u$ that independently join the MIS \lubymis in the first Luby step. 
Because with probability $1 - 1/\poly(d)$, $|S_u| = \Omega(\log^5 d)$, we have $\Omega(|S_u|/\log^3 d) = \Omega(\log^2 d)$. 

Each of those nodes independently joins \lubymis with probability at least $1/O(\log d)$ and therefore, one of those nodes joins \lubymis with probability at least $1 - 1/\poly(d)$.
\end{proof}

We use the fact that nodes are added mostly independently and with low probability to \vsetaside in order to show that the graph induced by these nodes cannot have many edges.

\begin{lemma}\label{lemma: set-aside-small}
    The induced subgraph $G[\vsetaside]$ has $O(n)$ edges \whp
\end{lemma}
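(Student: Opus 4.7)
The plan is a two-step argument: first bound $\E[|E(G[\vsetaside])|]=O(n)$, then upgrade to a w.h.p.\ bound by partitioning each dyadic degree class into independent pieces. For the expectation, orient each edge from its lower-degree to its higher-degree endpoint. For an edge $e=(u,v)$ with $\deg(v)=d\ge\deg(u)$, \Cref{lemma: good-set-aside-prob,lemma: bad-set-aside-prob} give $\Pr[u,v\in\vsetaside]\le\Pr[v\in\vsetaside]\le 1/d^{c_0}$, where $c_0$ can be made an arbitrarily large constant by choosing $\gamma$ in \Cref{def:good-bad-nodes} and the constant $c$ in Line~\ref{line:2rs-setaside-2} large enough. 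Each node $v$ has at most $\deg(v)<2d$ in-edges, so its expected in-edge contribution to $G[\vsetaside]$ is $O(1/d^{c_0-1})$. Bucketing nodes by dyadic class $V_d:=\{u:\deg(u)\in[d,2d)\}$ with $n_d:=|V_d|$ and $\sum_d n_d\le n$ then gives $\E[|E(G[\vsetaside])|]=O(n)$.

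For concentration I would bound $|V_d\cap\vsetaside|$ separately for each dyadic $d$. The main obstacle is that for arbitrary pairs of nodes the indicators $\mathbf{1}\{u\in\vsetaside\}$ are not independent. However, inspecting the proofs of \Cref{lemma: good-set-aside-prob,lemma: bad-set-aside-prob}, each exhibits a \emph{proxy} event $Y_u$ with $\mathbf{1}\{u\in\vsetaside\}\le Y_u$ and $\Pr[Y_u]\le 1/d^{c_0}$: for a good node $u$, ``$u$ has no neighbor in $\vsampled$''; for a bad node $u$, ``no node of $A_u$ joins \lubymis in the first Luby round''. Both proxies depend only on sampling and first-Luby-round randomness of nodes within distance $3$ of $u$, so the $Y_u$'s are mutually independent across any set of nodes that are pairwise at $G$-distance strictly greater than $6$.

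I would therefore greedily color $V_d$ into color classes where same-color nodes lie at pairwise distance $>6$. Since each node has at most $O(\Delta^6)=O(n^{6\alpha})$ nodes within distance $6$, this uses $O(n^{6\alpha})$ colors. Within each color class $C$, the $Y_u$'s are independent Bernoullis of mean $\le 1/d^{c_0}$, so a standard Chernoff bound gives $\sum_{u\in C}Y_u\le 2|C|/d^{c_0}+O(\log n)$ with failure probability $1/\poly(n)$; a union bound over all $O(n^{6\alpha}\log\Delta)$ color classes across all dyadic $d$ still fails with probability $1/\poly(n)$. Summing per-color bounds yields $|V_d\cap\vsetaside|\le 2n_d/d^{c_0}+O(n^{6\alpha}\log n)$, and since each node contributes at most $2d$ edges,
\[|E(G[\vsetaside])|\le \sum_d 2d\cdot|V_d\cap\vsetaside|=O\Big(\sum_d n_d/d^{c_0-1}\Big)+O(\Delta\cdot n^{6\alpha}\log n)=O(n)+O(n^{7\alpha}\log n).\]
Since $\alpha<1/8$ we have $7\alpha<7/8$, so the second term is $o(n)$, proving the claim. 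The careful bookkeeping of the independence radius is the core subtlety; the $O(\Delta^6)$ coloring overhead is precisely what would force the paper's assumption $\alpha<1/8$ in this approach.
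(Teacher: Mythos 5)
Your proof is correct, and it rests on the same core mechanism as the paper's: both exploit that membership in \vsetaside is determined by randomness within distance $3$ (\Cref{lemma: good-set-aside-prob,lemma: bad-set-aside-prob}), decompose the relevant vertex set into color classes of pairwise far-apart --- hence mutually independent --- nodes using $\Delta\le n^{\alpha}$, and union bound over the polynomially many classes. The bookkeeping differs, though. The paper greedily colors $G^{7}[\vsetaside]$, shows each color class has fewer than $n^{1-8\alpha}$ nodes \whp using only the crude bound that each (non-constant-degree) node joins \vsetaside with probability at most $1/2$, and then multiplies (number of classes) $\times$ (class size) $\times\ \Delta$ to get $O(n)$ edges; the full $1/\poly(d)$ strength of the two lemmas is not used quantitatively there. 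You instead color each \emph{deterministic} dyadic class $V_d$, introduce explicit proxy indicators $Y_u$ dominating membership in \vsetaside, and apply a Chernoff bound per color class to get $|V_d\cap\vsetaside| = O(n_d/d^{c_0}) + O(n^{6\alpha}\log n)$, then count edges degree class by degree class. Your version is quantitatively sharper, and coloring the deterministic sets $V_d$ rather than the random set \vsetaside sidesteps the slight awkwardness of union bounding over color classes of a coloring that itself depends on the random outcome. Two minor remarks: the constant $2$ in $\sum_{u\in C}Y_u\le 2|C|/d^{c_0}+O(\log n)$ is too optimistic for a multiplicative Chernoff bound with only additive $O(\log n)$ slack (what you actually get is $O(|C|/d^{c_0})+O(\log n)$ with a larger leading constant, which changes nothing downstream), and $c_0\ge 1$ already suffices for your final sums, so no tuning of $\gamma$ and $c$ beyond the paper's assumptions is needed.
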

\begin{proof}
A node $v$ is placed in $\vsetaside$ if either (1) $v$ is a good node with no neighbors in \vsampled, or (2) $v$ is an uncovered bad node in $B_d$ and has a neighbor $v$ that has at least $c\cdot\sqrt{d}\cdot\log^5(d)$ neighbors in $B_d$. 
By Lemmas~\ref{lemma: good-set-aside-prob} and~\ref{lemma: bad-set-aside-prob}, each such node $v$ is put in $\vsetaside$ with probability at most $1/\poly(\deg(v))$, and this happens independently of the randomness (for sampling into \vsampled and for the first Luby round when computing \lubymis) of nodes at distance more than $3$ from $v$.
The exponent of the polynomial depends on $c$ and $\gamma$.

Nodes with constant degree can be ignored, as they will contribute at most $O(n)$ edges. 
Therefore, we can assume that each node is put in \vsetaside with probability at most $1/2$.

For the sake of analysis, we compute a greedy coloring of $G^7[\vsetaside]$. 
To get $G^7[\vsetaside]$, we first build the graph $G^7$ which is the graph where we add an edge between any pair of nodes that are at distance at most $7$ in $G$, and then we take the induced subgraph of $G^7$ on $\vsetaside$. 

For each color class that has at least $n^{1-8\alpha}$ nodes, all of which are at distance more than $7$ from each other in $G$, they join \vsetaside independently of each other. 
Therefore, the probability that all nodes in a single color class belongs to \vsetaside is at most $(1/2)^{n^{1-8\alpha}} \ll 1/\poly(n)$. 
By union bounding over the color classes, we get that with probability $1- 1/\poly(n)$, the size of each color class is less than $n^{1-8\alpha}$.

Each node in $G^7[\vsetaside]$ has degree at most $n^{7\alpha}$ and hence there are $n^{7\alpha} + 1$ color classes. Recall that $\Delta \le n^\alpha$ in $G$, so if a color class $C \subseteq \vsetaside$ has less than $n^{1-8\alpha}$ nodes, the number of edges in $G$ that are incident on $C$ is at most $n^{1-7\alpha}$. Therefore, all color classes with less than $n^{1-8\alpha}$ nodes can only add at most $O(n)$ edges to $G[\vsetaside]$.

The lemma follows since we already showed that \whp, the size of each color class is less than $n^{1-8\alpha}$.
\end{proof}

Recall that $B_{d}$ is the set of bad nodes in $G$ with (initial) degree in the range $[d, 2d)$.
Let $B^{*}_{d}$ be the nodes in $B_{d}$ that have a neighbor $v$ with more than $c\sqrt{d}\log^5 d$ neighbors in $B_{d}$. If a node in $B^{*}_{d}$ is not covered by the MIS \lubymis on \vsampled, then it is put into $\vsetaside$. Let $\overline{B}_{d} = B_{d} \setminus B^{*}_{d}$. Define $B = \cup_{i=0}^{\log n} B_{2^i}$ and $\overline{B} = \cup_{i=0}^{\log n}\overline{B}_{2^i}$. 

\begin{lemma} \label{lemma: residual-only-bad}
The graph $G' =  G[V \setminus (\mathcal{I} \cup  N(\mathcal{I}) \cup N(N(\mathcal{I})))]$ of nodes that are uncovered before Line~\ref{line:2rs-second-time} contains only nodes in $\overline{B}$.
\end{lemma}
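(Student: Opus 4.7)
The plan is to argue by case analysis on the type of node, showing that any node not in $\overline{B}$ must already lie within distance~$2$ of $\mathcal{I}$ (i.e., belong to $\mathcal{I}\cup N(\mathcal{I})\cup N(N(\mathcal{I}))$) by the end of Line~\ref{line:2rs-setaside-mis}, and hence cannot appear in $G'$. The complement of $\overline{B}$ inside $V$ consists of exactly two classes: the good nodes, and the bad nodes lying in $\bigcup_d B^{*}_{d}$, so it suffices to dispose of these two classes separately.

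First, I would handle a good node $u$. If $u$ has no neighbor in $\vsampled$, then Line~\ref{line:2rs-setaside-1} places $u$ into $\vsetaside$, and the MIS computed on $G[\vsetaside]$ in Line~\ref{line:2rs-setaside-mis} must include either $u$ or some neighbor of $u$, so $u$ is at distance at most $1$ from $\mathcal{I}$. Otherwise $u$ has a neighbor $w\in\vsampled$; because Line~\ref{line:2rs-luby} produces an MIS $\lubymis$ of all of $G[\vsampled]$, the node $w$ is either itself in $\lubymis$ or adjacent to some $w'\in\lubymis$, and in either case $u$ is at distance at most $2$ from $\mathcal{I}$.

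Next, I would handle a bad node $u\in B^{*}_{d}$ for some $d$. If $u$ is already within distance~$2$ of $\lubymis$ after Line~\ref{line:2rs-luby}, there is nothing to prove. Otherwise, by the definition of $B^{*}_{d}$ together with the condition in Line~\ref{line:2rs-setaside-2}, $u$ is added to $\vsetaside$ in that step, and the subsequent MIS on $G[\vsetaside]$ then places either $u$ or one of its neighbors into $\mathcal{I}$, so $u$ is at distance at most $1$ from $\mathcal{I}$.

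Combining the two cases, every node that survives into $G'$ must be bad (else it is dispatched by the first case) and must fail to lie in $B^{*}_{d}$ for its own degree bucket $d$ (else it is dispatched by the second case), which is precisely the definition of $\overline{B}$. I do not expect any real technical obstacle here: the lemma is essentially a bookkeeping statement verifying which steps of \Cref{alg:ruling-set} cover which nodes. The only point requiring a touch of care is the observation that any sampled neighbor of $u$ is automatically within distance~$1$ of $\lubymis$, which follows directly from $\lubymis$ being a true MIS of $G[\vsampled]$ after the two-phase Luby computation in Line~\ref{line:2rs-luby}.
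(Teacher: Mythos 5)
Your proposal is correct and follows essentially the same case analysis as the paper: good nodes are covered either via a sampled neighbor (since $\lubymis$ is an MIS of $G[\vsampled]$) or by being placed in $\vsetaside$, nodes in $B^{*}_{d}$ are covered either by $\lubymis$ or by being placed in $\vsetaside$ in Line~\ref{line:2rs-setaside-2}, and all of $\vsetaside$ is covered by the MIS in Line~\ref{line:2rs-setaside-mis}. Your write-up is in fact slightly more explicit than the paper's about why a good node with a sampled neighbor is within distance $2$ of $\mathcal{I}$, but the argument is the same.
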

\begin{proof}
    Nodes not in $\overline{B}$ are the good nodes and the bad nodes in $B^*_d$ for all $d$.
    We show that all these nodes are covered before Line~\ref{line:2rs-second-time} and hence cannot belong to $G'$.
    Nodes that are either in \vsampled, or have a neighbor in \vsampled are covered by the MIS \lubymis computed in Line~\ref{line:2rs-luby}. 
    Good nodes that are not covered by \lubymis are put in \vsetaside in Line~\ref{line:2rs-setaside-1}. 
    Similarly, nodes in $B^*_d$ that are not covered by \lubymis are put in \vsetaside in Line~\ref{line:2rs-setaside-2}.
    All nodes in \vsetaside are covered because we compute an MIS on $G[\vsetaside]$ in Line~\ref{line:2rs-setaside-mis}.
    Therefore, all nodes not in $\overline{B}$ are covered before Line~\ref{line:2rs-second-time}.
\end{proof}

\subsection{Counting the Bad Nodes}
Let $V_{\geq d}$ be the set of all nodes in $G$ of (initial) degree at least $d$. Intuitively, we now want to say: (1) for each bad node in $\overline{B}_d$, there are at least $d/2$ edges to higher degree nodes and (2) from the higher degree nodes, only roughly $\sqrt{d}$ edges to $\overline{B}_d$. Hence, we can conclude that $d \cdot |\overline{B}_d| \leq |V_{\geq d^2}| \cdot \sqrt{d}$ which further implies that $|\overline{B}_d| \leq |V_{\geq d^2}| / \sqrt{d}$.

\begin{lemma}\label{lemma: bad-inc-lower}
 Consider a bad node $u$ with $\deg(u) = d$. Then for at least $d/2$ nodes $v \in N(u)$ it holds that $\deg(v) \geq d^2/(4\gamma^2\log^2 d)$.
\end{lemma}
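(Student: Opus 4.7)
\medskip

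\noindent\textbf{Proof plan.} The plan is to prove the statement by contradiction, directly from the definition of a bad node (\Cref{def:good-bad-nodes}). Recall that since $u$ is bad, we have
\[
\sum_{v \in N(u)} \frac{1}{\sqrt{\deg(v)}} < \gamma \log \deg(u) = \gamma \log d.
\]
The key observation is that neighbors of low degree contribute a lot to this sum, so there cannot be too many of them.

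Suppose for contradiction that more than $d/2$ neighbors $v \in N(u)$ satisfy $\deg(v) < d^2/(4\gamma^2 \log^2 d)$. For each such $v$,
\[
\frac{1}{\sqrt{\deg(v)}} > \frac{1}{\sqrt{d^2/(4\gamma^2 \log^2 d)}} = \frac{2\gamma \log d}{d}.
\]
Summing this lower bound over the (strictly more than $d/2$) low-degree neighbors gives a contribution to $\sum_{v \in N(u)} 1/\sqrt{\deg(v)}$ that is strictly greater than $(d/2) \cdot (2\gamma \log d / d) = \gamma \log d$, contradicting the bad condition above. Hence at least $d/2$ neighbors must have degree at least $d^2/(4\gamma^2 \log^2 d)$, as desired.

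\noindent There is no real obstacle here; the lemma is essentially an immediate consequence of unpacking the definition of bad together with a pigeonhole-style argument. The only thing to be careful about is the exact constant: the threshold $d^2/(4\gamma^2 \log^2 d)$ is chosen precisely so that the per-neighbor lower bound $1/\sqrt{\deg(v)} > 2\gamma \log d / d$ together with the bound of $d/2$ neighbors exactly matches $\gamma \log d$, producing the contradiction.
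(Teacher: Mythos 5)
Your proof is correct and follows exactly the paper's argument: assume more than $d/2$ neighbors have degree below the threshold, lower-bound their contribution to $\sum_{v \in N(u)} 1/\sqrt{\deg(v)}$ by $\gamma \log d$, and contradict the definition of a bad node. No differences worth noting.
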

\begin{proof}
Otherwise, more than half of the neighbors have degree less than $d^2/(4\gamma^2\log^2 d)$. 
Hence, 
\begin{align*}
    \sum_{v \in N(u)} \frac{1}{\sqrt{\deg(v)}} \geq \frac{d}{2} \cdot \frac{\sqrt{4\gamma^2 \log^2 d}}{\sqrt{d^2}} = \frac{d}{2} \cdot \frac{2\gamma \log d}{d} = \gamma \log d \ ,
\end{align*}
which is a contradiction with $u$ being bad.
\end{proof}

\begin{lemma}\label{lemma: bad-deg-fixed}
For any $d$, we have that $|\overline{B}_{d}| \leq 2|V_{\geq d^2/(4\gamma^2\log^2 d)}|\cdot \log^5 d / \sqrt{d} \leq 2n\log^5 d /\sqrt{d}$.
\end{lemma}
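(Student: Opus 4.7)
The plan is a straightforward double counting of the edges between $\overline{B}_d$ and the set of high-degree nodes, using Lemma~\ref{lemma: bad-inc-lower} for the lower bound and the definition of $\overline{B}_d = B_d \setminus B^*_d$ for the upper bound. Set $D := d^2/(4\gamma^2 \log^2 d)$, and let $E^\star$ denote the set of edges with one endpoint in $\overline{B}_d$ and the other in $V_{\geq D}$.

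For the lower bound, I would fix any $u \in \overline{B}_d \subseteq B_d$. Since $u$ is bad with $\deg(u) \in [d,2d)$, Lemma~\ref{lemma: bad-inc-lower} guarantees at least $d/2$ neighbors $v$ of $u$ with $\deg(v) \geq D$. Summing the contribution of each $u \in \overline{B}_d$ to $E^\star$ yields
\[
|E^\star| \;\geq\; \frac{d}{2}\cdot |\overline{B}_d|.
\]

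For the upper bound, I would count $E^\star$ from the $V_{\geq D}$ side. Fix any $v \in V_{\geq D}$ that appears as an endpoint of an edge in $E^\star$, so some $u \in \overline{B}_d$ is a neighbor of $v$. Because $u \notin B^*_d$, by definition every neighbor of $u$ — in particular $v$ — has at most $c\sqrt{d}\log^5 d$ neighbors in $B_d$, and hence at most $c\sqrt{d}\log^5 d$ neighbors in $\overline{B}_d$. Summing over $v \in V_{\geq D}$ gives
\[
|E^\star| \;\leq\; |V_{\geq D}|\cdot c\sqrt{d}\log^5 d.
\]
Combining the two bounds and rearranging yields $|\overline{B}_d| \leq 2c\cdot |V_{\geq D}|\cdot \log^5 d/\sqrt{d}$ (the stated constant $2$ in the lemma corresponds to the constant $c$ chosen in Line~\ref{line:2rs-setaside-2}, which we can take to be $1$), and the second inequality follows from the trivial bound $|V_{\geq D}| \leq n$.

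There is no substantive obstacle here; the only thing to be careful about is the direction of the quantifiers on the upper bound — it is crucial that the "at most $c\sqrt{d}\log^5 d$ neighbors in $B_d$" property holds for every neighbor of every $u \in \overline{B}_d$, which is exactly how $B^*_d$ is defined, and that this is what controls the per-$v$ count of edges in $E^\star$. The numerical bookkeeping with $D$ and the $\log^2 d$ factor is routine.
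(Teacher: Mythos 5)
Your proposal is correct and follows essentially the same double-counting argument as the paper: lower-bounding the edges between $\overline{B}_d$ and $V_{\geq d^2/(4\gamma^2\log^2 d)}$ via Lemma~\ref{lemma: bad-inc-lower} and upper-bounding them via the definition of $\overline{B}_d = B_d \setminus B^*_d$. Your remark about the constant $c$ being absorbed (taken as $1$) matches how the paper states the final bound.
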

\begin{proof}
Let $d' = d^2/(4\gamma^2\log^2 d)$. From \Cref{lemma: bad-inc-lower}, we know that for any $u \in \overline{B}_{d}$, at least $d/2$ edges go to $V_{\geq d'}$.
Furthermore, we have that any $v\in V_{\geq d'}$ has at most $c \sqrt{d} \log^5 d$ edges to $\overline{B}_{d}$, since otherwise, none of $v$'s neighbors are in $\overline{B}_{d}$.
Hence, we can conclude that $\frac{d}{2} \cdot |\overline{B}_{d}| \leq |V_{\geq d'}| \cdot \sqrt{d} \log^5 d$ which proves the lemma.
\end{proof}

Now we have that just before Line~\ref{line:2rs-second-time}, $|\overline{B}_d| \approx n/\sqrt{d}$, and we now run the entire algorithm again on the uncovered graph $G'$. For $G'$, we define the sets $V'_{\ge d}$, $B'_{d}$, $\overline{B'}_{d}$, $B'$, and $\overline{B'}$ similarly as we did for $G$. Now we would expect that $|\overline{B'}_d| \approx n/d$, which is good because all nodes have degree at most $2d$. 

\begin{lemma}\label{lemma: bad-all-deg}
The graph induced by the uncovered nodes before Line~\ref{line:2rs-mis-residual} has $O(n)$ edges.
\end{lemma}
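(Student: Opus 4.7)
The plan is to apply the structural bounds of \Cref{lemma: bad-inc-lower} and \Cref{lemma: bad-deg-fixed} a second time, now to $G'$. Both of those proofs are purely structural, so they apply verbatim to $G'$ and give $|\overline{B'}_d| \le 2\,|V'_{\ge d'}|\,\log^5 d / \sqrt{d}$, where $d' := d^2/(4\gamma^2 \log^2 d)$. Moreover, the same argument as in \Cref{lemma: residual-only-bad}, applied to the second execution of Lines~\ref{line:2rs-sample}--\ref{line:2rs-setaside-mis}, shows that every node left uncovered just before Line~\ref{line:2rs-mis-residual} lies in $\overline{B'}$. Since each node in $\overline{B'}_d$ has at most $2d$ neighbors in $G'$, it suffices to prove that $\sum_{i \ge 0} |\overline{B'}_{2^i}| \cdot 2^i = O(n)$.

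The key step is to bound $|V'_{\ge d'}|$, and this is where the first application of \Cref{lemma: bad-deg-fixed} enters. By \Cref{lemma: residual-only-bad} we have $V' \subseteq \overline{B}$, and $\deg_{G'}(v) \le \deg_G(v)$ for every $v \in V'$, so $V'_{\ge d'} \subseteq \bigcup_{2^i \ge d'/2} \overline{B}_{2^i}$. Plugging in the bound $|\overline{B}_{2^i}| = O(n \cdot i^5 / 2^{i/2})$ from \Cref{lemma: bad-deg-fixed} and summing the (essentially geometric) tail starting at $i = \Omega(\log d')$ gives $|V'_{\ge d'}| = O(n \log^5(d') / \sqrt{d'})$. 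Substituting $\sqrt{d'} = d/(2\gamma \log d)$ and $\log d' \le 2 \log d$ yields $|V'_{\ge d'}| = O(n \log^6 d / d)$, and hence $|\overline{B'}_d| = O(n \log^{11} d / d^{3/2})$.

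Feeding this into the sum, $\sum_{i \ge 0} |\overline{B'}_{2^i}| \cdot 2^i = \sum_{i \ge 1} O(n \cdot i^{11}/2^{i/2}) + O(n)$, which is a convergent series and therefore $O(n)$ (the constant-degree range contributes at most $O(n)$ trivially). Thus the induced subgraph has $O(n)$ edges, as claimed. The main subtlety is the two-level use of \Cref{lemma: bad-deg-fixed}: the first-round bound $|\overline{B}_d| = O(n \log^5 d / \sqrt{d})$ on its own is too weak to imply $O(n)$ edges (the corresponding $\sum_d |\overline{B}_d| \cdot d$ diverges), but it shrinks the set of high-degree vertices surviving into $G'$ exactly enough that reapplying the same lemma inside $G'$ delivers the extra $1/d^{3/2}$ factor needed for convergence.
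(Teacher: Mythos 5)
Your proof is correct and follows essentially the same route as the paper's: you apply \Cref{lemma: bad-deg-fixed} a second time to $G'$, bound $|V'_{\ge d'}|$ by combining \Cref{lemma: residual-only-bad} with the first-round bound on $|\overline{B}_{2^i}|$, and sum the resulting geometric series. Your bookkeeping is slightly tighter (you keep the polylogs explicit and obtain $|\overline{B'}_d| = O(n\log^{11}d/d^{3/2})$, where the paper settles for $O(n/d^{1.1})$), but the argument is the same.
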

\begin{proof}
Again let $d' = d^2/(4\gamma^2\log^2 d)$. By a similar argument as \Cref{lemma: residual-only-bad}, the uncovered nodes are a subset of $\overline{B'}$.
We can assume that $d$ is at least some large enough constant, as the nodes in all $B'_{d}$ for constant $d$ have at most $O(n)$ edges.
By \Cref{lemma: bad-deg-fixed}, we have that $|\overline{B}_{d}| \leq 2|V_{\ge d'}|\log^5 d /\sqrt{d}$ for any $d$, and we can similarly argue $|\overline{B'}_{d}| \leq 2|V'_{\ge d'}|\log^5 d /\sqrt{d}$ for any $d$.

Now $V'_{\ge d'} \subseteq \overline{B}_{\ge d'}$ where $\overline{B}_{\ge d'} = \cup_{i=\log d'}^{\log n} \overline{B}_{2^i}$. Therefore,
\begin{align*}
|V'_{\ge d'}| \leq \sum_{i=\log d'}^{\log n} \frac{2 n\log^5 2^i}{\sqrt{2^{i}}} = \sum_{i=\log d'}^{\log n} \frac{2n i^5}{2^{i/2}} \le \sum_{i=\log d'}^{\log n} \frac{2n}{2^{i/3}} \le O\left(\frac{n \log^{2/3} d}{d^{2/3}}\right)
\end{align*}

Where the last inequality follows because the sum on the left is a geometric sequence with rate $2^{-1/3}$ and it is well known that if the rate is between $0$ and $1$, the sum is asymptotically dominated by the first term. Therefore, $|\overline{B'}_{d}| \leq O(n \cdot (\log^6 d) / d^{7/6}) \le O(n/d^{1.1})$ since we assumed that $d$ is a large enough constant\footnote{The conference version bounds $|\overline{B'}_{d}|$ more loosely by $O(n/d)$ which is not enough to prove the lemma}.
Since each node in $\overline{B'}_{d}$ has degree at most $2d$, the graph induced by nodes in $\overline{B'}$ has edges bounded by $\sum_{d} O(n/d^{0.1}) \le \sum_{i \ge 0} O(n/2^{i/10}) = O(n)$.
\end{proof}

\subsection{Degree Reduction}\label{sec:degree-reduction}

We define the degree reduction process similar to \cite{ghaffari2018improved} and for sake of completeness we provide a self-contained explanation here. 
Our goal is to lower the maximum degree of the input graph $G$ such that, after a constant number $i$ of steps, the maximum degree is strictly less than $n^{\alpha}$ for some fixed constant $\alpha < 1/8$.
Let $G_1 = G$, and $\Delta_{j} = \Delta^{(3/4)^j}$. Let $i$ be a value such that for all $1 \le j<i$, $\Delta_j > n^{\alpha/2}$ and $\Delta_i \le n^{\alpha/2}$. Since $\Delta \le n$, we can say that the largest value $i$ can take is $\lceil \log_{4/3} (2/\alpha)\rceil = O(1)$.

In each step $j = 1 \dots i$, we sample nodes $S_j$ in $G_{j}$ with probability $1/\Delta_j$, and then compute an MIS on the subgraph induced by the sampled nodes $G_{j}[S_{j}]$. The residual graph $G_{j+1}$ is obtained by removing all the neighbors of the sampled nodes.  
For each of these steps to be possible, the graph induced by the sampled nodes must have $O(n)$ edges \whp 
In order to guarantee the feasibility of step $j$, the maximum degree in the residual graph after step $j-1$ (i.e. $G_j$) has to be sufficiently small, which we show in the following lemma.
\begin{lemma}\label{lemma:maxDegInPrefixGraph}
    If we process a graph $G_{j}$ induced by nodes picked uniformly at random with probability $1/\Delta_j$, the maximum degree in the residual graph $G_{j+1}$ is $O(\Delta_j\log n)$ \whp
\end{lemma}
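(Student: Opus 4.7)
The plan is to show that any vertex whose degree in $G_j$ is much larger than $\Delta_j \log n$ almost surely has at least one neighbor in the sampled set $S_j$, and is therefore deleted when forming $G_{j+1} = G_j - (S_j \cup N(S_j))$. Since $G_{j+1}$ is a vertex-induced subgraph of $G_j$, the degree of any surviving vertex in $G_{j+1}$ is bounded above by its degree in $G_j$, so a cutoff argument that removes all high-$G_j$-degree vertices immediately gives the required bound on $\Delta(G_{j+1})$.

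Concretely, I would fix a sufficiently large constant $c$ and set $d^{*} := c\,\Delta_j \ln n$. For any vertex $v$ with $\deg_{G_j}(v) \geq d^{*}$, each of its neighbors is included in $S_j$ independently with probability $1/\Delta_j$, so the probability that none of them is sampled is at most $(1-1/\Delta_j)^{d^{*}} \leq \exp(-d^{*}/\Delta_j) = n^{-c}$. Since the removal rule deletes every vertex adjacent to a sampled node, this is exactly an upper bound on the probability that $v$ survives in $G_{j+1}$. A union bound over the at most $n$ candidate vertices then shows that, with probability at least $1 - n^{1-c}$, no vertex whose $G_j$-degree is at least $d^{*}$ appears in $G_{j+1}$, and hence the maximum degree in $G_{j+1}$ is at most $d^{*} = O(\Delta_j \log n)$.

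I do not anticipate any real obstacle: the sampling probability $1/\Delta_j$ is chosen precisely so that the degree threshold $\Theta(\Delta_j \log n)$ drives the survival probability below $n^{-c}$, and the only calibration needed is to pick $c$ large enough to match the paper's high-probability convention.
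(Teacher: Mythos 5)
Your proposal is correct and follows essentially the same argument as the paper: bound the probability that a vertex of degree at least $c\,\Delta_j\log n$ in $G_j$ has no sampled neighbor by $(1-1/\Delta_j)^{c\Delta_j\log n}\le n^{-c}$, then union bound over all vertices. The only (harmless) addition is your explicit remark that degrees can only decrease when passing to the induced subgraph $G_{j+1}$, which the paper leaves implicit.
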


\begin{proof} 
    Consider a node $v$ in the residual graph such that $\deg (v) > d$. 
    The probability that a neighbor of $v$ is sampled is at least $1/\Delta_j$. 
    Therefore, the probability that no neighbor of $v$ is sampled is at most $\left( 1-\frac{1}{\Delta_j} \right)^d\leq \exp{(-d/\Delta_j)}$. 

    Denote $c>1$ an arbitrary constant, and suppose that $d = c\Delta_j\log n$. 
    Then,  the probability that $\deg(v)> d$ is at most $\exp{(-c \log n)}= n^{-c}$. 
    Hence, $\deg(v)=O(\Delta_j\log n)$ \whp 
    We conclude the lemma by union bounding over all nodes of the residual graph. 
\end{proof}

Therefore, the residual graph $G_{i+1}$ has maximum degree $O(\Delta_i\log n)$ $\le O(n^{\alpha/2}\log n) $ $\le n^\alpha$. This guarantee is our assumption in \Cref{alg:ruling-set} and it holds \whp We now finish by showing that each induced subgraph has linear size.

\begin{lemma}
    For all $1 \leq j \leq i$, the graph induced by sampled nodes in step $j$, $G_j[S_j]$, has $O(n)$ edges \whp
\end{lemma}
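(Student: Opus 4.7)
The plan is to adapt the two-step recipe of \Cref{lemma: 2rs-gsamp-small}---bounding the expected edge count of the sampled subgraph and then concentrating---to the uniform sampling regime of the degree reduction. The twist is that the natural per-coordinate Lipschitz constant $\Delta(G_j)$ here can be far larger than the $n^{\alpha}$ bound used in \Cref{lemma: 2rs-gsamp-small}, so \Cref{lemma:mcdiarmid} does not apply directly. I will instead bound $|E(G_j[S_j])|$ by $\tfrac{1}{2}|S_j|\cdot \max_u D_u$, where $D_u := |N_{G_j}(u)\cap S_j|$, and concentrate the two factors separately.

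First, I would control the maximum degree of $G_j$. For $j=1$ it equals $\Delta$ by definition; for $j\ge 2$, applying \Cref{lemma:maxDegInPrefixGraph} inductively to step $j-1$ yields $\Delta(G_j) = O(\Delta_{j-1}\log n) = O(\Delta_j^{4/3}\log n)$ \whp. Since $i=O(1)$, composing the $O(1)$ many \whp events does not hurt the $1-1/\poly(n)$ failure guarantee.

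Next, I would concentrate the two factors. The sampled set $|S_j| \sim \mathrm{Bin}(n,1/\Delta_j)$ has mean $n/\Delta_j \ge n^{1/4} = \omega(\log n)$, so a Chernoff bound yields $|S_j| \le 2n/\Delta_j$ \whp. For each $u$, $D_u \sim \mathrm{Bin}(\deg_{G_j}(u), 1/\Delta_j)$, so Chernoff plus a union bound over the $n$ nodes yields $\max_u D_u = O(\Delta(G_j)/\Delta_j + \log n)$ \whp. Combined with the double-counting identity $\sum_{u\in S_j} D_u = 2|E(G_j[S_j])|$, this gives $|E(G_j[S_j])| \le \tfrac{1}{2}|S_j|\cdot \max_u D_u$.

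Finally, I would plug in the degree bound: for $j=1$, since $i\ge 1$ forces $\Delta > n^{\alpha/2}$, we have $\Delta^{1/4}=\omega(\log n)$ dominating the additive $\log n$, so $\max_u D_u = O(\Delta^{1/4})$ and $|E(G_1[S_1])| = O(n/\sqrt{\Delta}) = O(n)$. For $j\ge 2$, $\max_u D_u = O(\Delta_j^{1/3}\log n)$ and $|E(G_j[S_j])| = O(n\log n/\Delta_j^{2/3}) = o(n)$, using that the definition of $i$ forces $\Delta_{j-1}>n^{\alpha/2}$ and hence $\Delta_j > n^{3\alpha/8}$ for every $1\le j\le i$. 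The main obstacle is exactly the reason this detour is required: because $\Delta(G_j)$ can be as large as $\Theta(n)$ for small $j$, the bounded-differences argument of \Cref{lemma: 2rs-gsamp-small} does not concentrate the edge count directly, and the product decomposition through $(|S_j|, \max_u D_u)$ is needed.
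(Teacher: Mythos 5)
Your proposal is correct and follows essentially the same route as the paper: the paper also bounds $|E(G_j[S_j])|$ by the product $|S_j|\cdot\Delta(G_j[S_j])$, controls $\Delta(G_j)$ via \Cref{lemma:maxDegInPrefixGraph}, and applies Chernoff bounds to $|S_j|$ and to the sampled degrees before multiplying (the paper's final cancellation uses $\Delta_{j-1}/\Delta_j^2 = 1/\sqrt{\Delta_{j-1}}$ together with $\Delta_{j-1} > n^{\alpha/2}$, matching your $\Delta_j > n^{3\alpha/8}$ observation). The only cosmetic differences are slightly looser log factors in the paper's intermediate bounds and your explicit case split at $j=1$, which the paper absorbs into the general formula via $\Delta_0 = \Delta$.
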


\begin{proof}

    First, consider a node $v\in S_j$, and $u\in N(v)$. The probability that $u\in S_j$ is $1/\Delta_j$. By \Cref{lemma:maxDegInPrefixGraph}, we condition on the high probability event that $O\left(\Delta_{j-1}\log n\right)$ is the maximum degree of $G_j$. Note that this conditioning only affects the randomness used for sampling in iterations $1, \dots, j-1$. In particular, this implies that the conditioning does not affect the randomness used for sampling in iteration $j$. Therefore, the expected degree of $v$ in $S_j$ is at most $\mu = O\left(\log n \cdot \Delta_{j-1}/\Delta_{j}\right)$.
    Using \Cref{lemma:ChernoffBound}, $\Pr[\deg(v)\geq (1+c)\mu]\leq \exp(-c^2\mu /(2+c)) \leq n^{-c}$, since $\mu \ge \log n$.
    By union bounding over all nodes of $S_j$, the maximum degree in $G_j[S_j]$ is $O\left(\log n \cdot \Delta_{j-1}/\Delta_{j}\right)$ \whp 

    Second, the expected number of nodes in $S_j$ is at most $\mu' = n/\Delta_{j}$. Using \Cref{lemma:ChernoffBound}, $\Pr[|S_j| \geq (1+c\log n)\mu']\leq \exp(-c^2 \log^2 n\mu' /(2+c\log n)) \leq n^{-c}$, since $\mu' \ge 1$. Therefore, $|S_j| = O(n\log n/\Delta_{j})$ \whp
    
    Now we can upper bound the number of edges in $G_j[S_j]$ by $|S_j|\cdot \Delta(G_j[S_j])$. Therefore, we can say \whp that 
    $$ |S_j|\cdot \Delta(G_j[S_j]) = O\left(\frac{n \log n}{\Delta_{j}} \cdot \frac{\Delta_{j-1}\log n}{\Delta_{j}} \right)  = O\left(n\log^2 n \cdot \frac{\Delta_{j-1}}{\Delta_{j}^2}\right)$$ 
    Moreover, since $\Delta_{j} = \Delta^{(3/4)^j}$, we have that $\Delta_{j-1}/\Delta_{j}^2 = \Delta^{(3/4)^{j-1} - 2 (3/4)^{j}} = \Delta^{-(1/2)(3/4)^{j-1}} = 1/\sqrt{\Delta_{j-1}}$. This term cancels the $\log^2 n$ term since $\Delta_{j-1} > n^{\alpha/2} \gg \log^4 n$. Hence, the number of edges in $G_j[S_j]$ is $O(n)$ \whp

\end{proof}

\section{Implementation of \parallelrulingset}\label{sec: implementation-main}
In this section, we show how to implement the algorithm \parallelrulingset in several models of parallel computation. In each case of the implementations, we only need to bound the runtime and memory usage of the algorithm in the corresponding model. Since we faithfully execute the \parallelrulingset algorithm, the solutions computed are guaranteed to be correct.

\subsection{\clique}
In the \clique model~\cite{LotkerPPP2003CongClique}, we have $n$ machines, where each machine is identified with a single node in the input graph. The communication network is a clique, that is, the machines are connected in all-to-all fashion, and the input graph is considered to be a subgraph of the network.
Machines can send unique messages to all other machines via the edges in the clique, and the bandwidth of each edge is limited to $O(\log n)$ bits. \clique and \mpc are closely related to each other, for example \cite{HegemanPTCS2015, BehnezhadDH2018} show how to implement any \clique algorithm in the \mpc model.

By the definition of \parallelrulingset, we compute an MIS sequentially on several subgraphs: (1) the sampled graphs $G_j[S_j]$ ($1 \le j \le i = O(1)$) for reducing the degree, (2) the graphs induced by $\vsampled$ and $\vsetaside$ on $G$ during the first run and on $G'$ during the second run, and finally (3) the graph induced by uncovered nodes in the last step. Creating a subgraph takes a constant number of rounds, since nodes just need to know the random choices and aggregate information of their $1$-hop neighbors. 

By the lemmas in \Cref{subsection:2rs-small-graphs} and \Cref{sec:degree-reduction}, all these subgraphs have $O(n)$ edges \whp Therefore, we can use Lenzen's routing protocol~\cite{Lenzen2013} to gather all the subgraphs one after the other at a single machine in $O(1)$ rounds. This machine computes the MIS according to \Cref{alg:ruling-set}, and informs the rest of the nodes whether they joined the MIS or not, which allows us to identify the next subgraph. Therefore, we get the following result.

\begin{theorem}\label{thm: 2rs-congestedClique}
    There is a Congested Clique algorithm to find a $2$-ruling set. The algorithm runs in $O(1)$ rounds \whp
\end{theorem}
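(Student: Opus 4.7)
The plan is to observe that \Cref{alg:ruling-set} together with the degree reduction of \Cref{sec:degree-reduction} is a sequence of a constant number of sub-procedures, each of which is either (a) a local $O(1)$-round aggregation over 1-hop neighborhoods in $G$, or (b) an MIS computation on a subgraph with $O(n)$ edges w.h.p. Correctness of the output is inherited from \Cref{thm: 2rulingset}, so only the round complexity must be bounded, w.h.p. Since the \clique is an all-to-all network and the input graph is a subgraph of it, any task of type (a) — computing degrees, broadcasting random choices to neighbors in $G$, identifying good/bad nodes per \Cref{def:good-bad-nodes}, detecting set-aside status in Lines~\ref{line:2rs-setaside-1} and~\ref{line:2rs-setaside-2}, and carving out the residual graph $G'$ — can be done in $O(1)$ rounds.

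For tasks of type (b), first I would count them. The degree reduction produces $i=O(\log_{4/3}(2/\alpha))=O(1)$ MIS calls on the sampled graphs $G_j[S_j]$. The two invocations of Lines~\ref{line:2rs-sample}–\ref{line:2rs-setaside-mis} give four more MIS computations (on $\gsampled$ and $G[\vsetaside]$ for both $G$ and $G'$), and Line~\ref{line:2rs-mis-residual} gives the final one. So there are $O(1)$ MIS subroutines overall. For each of them the corresponding subgraph has $O(n)$ edges w.h.p.\ by \Cref{lemma: 2rs-gsamp-small}, \Cref{lemma: set-aside-small}, \Cref{lemma: bad-all-deg}, and the linear-edge lemma in \Cref{sec:degree-reduction}. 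I would then use Lenzen's routing protocol~\cite{Lenzen2013} to have every node ship the edges of the current subgraph incident to it to one designated sink machine in $O(1)$ rounds; the sink computes the MIS locally, and a second application of Lenzen's routing disseminates the membership bits back to the affected nodes in $O(1)$ rounds. After this, the next subgraph can be prepared via another type-(a) aggregation.

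The main obstacle is verifying that Lenzen's routing is applicable at every gather/scatter step. Its guarantee requires that each machine send and receive at most $O(n)$ words; since each subgraph has $O(n)$ edges w.h.p., the sink absorbs $O(n)$ words and every other machine ships at most its degree in the subgraph (plus at most one MIS bit), which is within the budget. Taking a union bound over the $O(1)$ high-probability events that the subgraphs are sparse, every phase succeeds in $O(1)$ rounds w.h.p., giving a total of $O(1)$ rounds w.h.p.\ and establishing the theorem.
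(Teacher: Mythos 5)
Your proposal is correct and follows essentially the same route as the paper: decompose the algorithm into $O(1)$ local aggregation steps and $O(1)$ MIS computations on subgraphs that are $O(n)$-sized w.h.p.\ by the lemmas of \Cref{subsection:2rs-small-graphs} and \Cref{sec:degree-reduction}, then gather each subgraph at one machine via Lenzen's routing and solve the MIS there. The extra care you take in checking the send/receive budgets for Lenzen's routing and in union-bounding over the constantly many sparsity events is implicit in the paper's argument but entirely consistent with it.
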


\subsection{Linear Memory \mpc Model}

In the \mpc model~\cite{Karloff2010, Goodrich2011, Beame2017}, we have $M$ machines with $S$ words of memory each, where each word corresponds to $O(\log n)$ bits.
Notice that an identifier of an edge or a node requires one word to store.
The machines communicate in an all-to-all fashion.
The input graph is divided among the machines and for simplicity and without loss of generality, we assume that the edges of each node are placed on the same machine.
In the \emph{linear-space} \mpc model, we set $S = \Theta(n)$. 
Furthermore, the \emph{total space} is defined as $M \cdot S$ and in our case, we have $M \cdot S = \Theta(m)$.
Notice that $M \cdot S = \Omega(m)$, for the number of edges $m$ in the input graph, simply to store the input.

The implementation follows as a direct consequence of the \clique model. Since the local memory is $\Theta(n)$, we can send each of the $O(n)$ sized subgraphs one by one to a single machine in $O(1)$ rounds and use this machine to compute the MIS as described in the algorithm. We obtain the following theorem.

\begin{theorem}\label{thm: 2rs-linearMPC}
    There is an \mpc algorithm with $O(n)$ words of local and $O(m)$ total memory to find a $2$-ruling set.
    The algorithm runs in $O(1)$ rounds \whp
\end{theorem}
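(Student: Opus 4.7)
The plan is to faithfully execute \Cref{alg:ruling-set}, preceded by the degree-reduction preprocessing of \Cref{sec:degree-reduction}, using the fact that every intermediate subgraph on which we need an MIS has $O(n)$ edges \whp and therefore fits in the memory of a single machine. I would start from the standard input layout in which each node's adjacency list is stored on one machine (with a high-degree node split across $O(\deg(v)/n)$ machines when necessary). Then every node knows its own degree, can classify itself as good or bad per \Cref{def:good-bad-nodes}, and can draw the sampling coins of Line~\ref{line:2rs-sample} and of each degree-reduction step locally without communication. The sets produced by Lines~\ref{line:2rs-setaside-1} and~\ref{line:2rs-setaside-2} depend only on the sampling outcomes of $1$- and $2$-hop neighbors, which are obtained by a constant number of rounds of aggregation along edges using standard \mpc sorting/routing primitives.

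The core of the implementation is the MIS subroutine on each subgraph. Given the current subgraph $H \in \{G_j[S_j],\, G[\vsampled],\, G[\vsetaside],\, G[V\setminus(\mathcal{I}\cup N(\mathcal{I})\cup N(N(\mathcal{I})))]\}$, I would invoke a constant-round \mpc sorting routine to route all edges of $H$ to one designated machine, compute an MIS on $H$ sequentially in local memory (e.g., by running Luby's algorithm on the stored adjacency lists), and broadcast the MIS identifiers back to all machines in $O(1)$ rounds. This is exactly the Lenzen-style gathering already invoked in the proof of \Cref{thm: 2rs-congestedClique}, and it succeeds because by \Cref{lemma: 2rs-gsamp-small}, \Cref{lemma: set-aside-small}, \Cref{lemma: bad-all-deg}, and the lemmas of \Cref{sec:degree-reduction}, each such $H$ has $O(n)$ edges \whp, so the receiving machine's $O(n)$-word memory suffices. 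After the MIS labels are known globally, one more step of $1$- and $2$-hop propagation identifies the newly covered nodes and produces the next subgraph.

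The algorithm consists of $O(1)$ such phases: at most $i = O(1)$ degree-reduction iterations, two runs of the sample/setaside MIS pair (Lines~\ref{line:2rs-sample}--\ref{line:2rs-setaside-mis}, once directly and once inside Line~\ref{line:2rs-second-time}), and the final MIS of Line~\ref{line:2rs-mis-residual}, each costing $O(1)$ rounds. The main thing to verify is that the $O(n)$ local-memory bound is respected throughout and that the $O(n+m)$ total-memory bound is never exceeded: both follow because we never route a subgraph larger than what the structural lemmas promise, and because the input itself occupies $O(n+m)$ words by assumption. Union-bounding the constantly many high-probability events over all phases preserves the \whp guarantee, yielding the claimed $O(1)$-round algorithm with $O(n)$ words of local memory and $O(n+m)$ words of total memory.
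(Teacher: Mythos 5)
Your proposal matches the paper's implementation exactly: both gather each of the constantly many $O(n)$-edge subgraphs (guaranteed small by \Cref{lemma: 2rs-gsamp-small}, \Cref{lemma: set-aside-small}, \Cref{lemma: bad-all-deg}, and the degree-reduction lemmas) onto a single machine, compute the MIS there locally, and broadcast the result, for $O(1)$ rounds total. The proof is correct and takes essentially the same route as the paper, which derives this theorem directly from the \clique implementation.
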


\subsection{Semi-streaming}
Typically, in the distributed and parallel settings, the input graph is too large to fit a single computer.
Hence, it is divided among several computers (in one way or another) and the computers need to communicate with each other to solve a problem.
Another angle at tackling large datasets and graphs is through the \emph{graph streaming models}~\cite{Feigenbaum2005, Feigenbaum2005a, Mut2005}.
In these models, the graph is not stored centrally, but an algorithm has access to the edges one by one in an input stream, chosen randomly or by an adversary (the choice sometimes makes a difference).
We assume that each edge is processed before the next pass starts.
In the semi-streaming setting, the algorithm has $\widetilde{O}(n)$ working space, that it can use to store its state.
The goal is to make as few \emph{passes} over the edge-stream as possible, ideally just a small constant amount.
Notice that in the case of many problems, such as matching approximation or correlation clustering, simply storing the output might demand $\Omega(n)$ words.

In the semi-streaming model, we use one pass to process one subgraph of size $O(n)$, by storing it in memory and computing an MIS as described in the algorithm. Since there are $O(1)$ such subgraphs, we require $O(1)$ passes. This leads to the following theorem.

\begin{theorem}\label{thm: 2rs-streaming}
    There is an $O(1)$-pass semi-streaming algorithm with $O(n)$ words of space to find a $2$-ruling set \whp
\end{theorem}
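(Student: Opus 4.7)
The plan is to simulate \Cref{alg:ruling-set} directly on the stream, using one or two passes per primitive step of the algorithm and exploiting the structural lemmas of \Cref{subsection:2rs-small-graphs} and \Cref{sec:degree-reduction} to bound the memory consumption. Since the algorithm performs only a constant number of MIS computations, and since the relevant subgraphs are provably of size $O(n)$ edges \whp, a constant number of passes suffice.

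First I would make an initial pass over the stream to compute and store $\deg(v)$ for every $v \in V$ in a global table of size $O(n)$. Every node's sampling coin in Line~\ref{line:2rs-sample}, and its good/bad classification from \Cref{def:good-bad-nodes}, is then a deterministic function of this table together with the (pre-drawn) random bits of that node, so both $\vsampled$ and the good/bad labels are available before any subsequent pass. The same holds for the degree-reduction stages of \Cref{sec:degree-reduction}: each sampling set $S_j$ can be drawn locally from the degrees of the previous residual graph, which itself is determined by the MIS/sampled sets already materialised in memory.

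For each of the $O(1)$ MIS subcomputations invoked by the algorithm (the $G_j[S_j]$ of the degree reductions, $\gsampled = G[\vsampled]$, $G[\vsetaside]$ during both runs, and the final induced graph in Line~\ref{line:2rs-mis-residual}), I would use one additional pass: as each edge $\{u,v\}$ streams by, both endpoints consult the pre-computed membership bits and the edge is kept in memory iff it belongs to the current target subgraph. \Cref{lemma: 2rs-gsamp-small}, \Cref{lemma: set-aside-small}, \Cref{lemma: bad-all-deg}, and the degree-reduction lemmas guarantee that the retained set has $O(n)$ edges \whp, so it fits in the semi-streaming budget; the MIS on that subgraph is then computed offline in memory (e.g.\ with Luby's algorithm), and the updated MIS, covered, and uncovered flags are stored in the $O(n)$-word table for the benefit of the next stage.

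The step that requires the most care is Line~\ref{line:2rs-setaside-2}, because the criterion for putting an uncovered $u \in B_d$ into $\vsetaside$ depends on a $2$-hop property: some neighbor $v$ must have at least $c\sqrt{d}\log^5 d$ neighbors in $B_d$. I would resolve this with one auxiliary pass in which every node $v$ maintains, for the $O(\log n)$ relevant degree buckets $d$, a counter of how many of its neighbors lie in $B_d$; then in another pass, streaming each edge $\{u,v\}$ with $u$ bad and still uncovered tells $u$ whether $v$ meets the threshold. The per-node bucketed counters fit in $\widetilde O(n)$ space (and, once a bucket is fixed per edge, only $O(n)$ words are active at a time). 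Summing over all stages the number of passes is bounded by a constant, and the working space stays within the semi-streaming budget, yielding the stated $O(1)$-pass, $O(n)$-space algorithm.
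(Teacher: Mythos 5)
Your proposal is correct and follows essentially the same route as the paper: simulate \Cref{alg:ruling-set} pass-by-pass, storing each of the $O(1)$ relevant subgraphs (each of size $O(n)$ edges \whp by the lemmas of \Cref{subsection:2rs-small-graphs} and \Cref{sec:degree-reduction}) in memory and computing the MIS offline. You spell out more implementation detail than the paper does — notably the degree-counting pass and the bucketed counters for the two-hop criterion of Line~\ref{line:2rs-setaside-2} — but these are refinements of, not departures from, the paper's argument.
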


\section{Concentration Inequalities}\label{sec: concentration}
\begin{lemma}[Chernoff Bounds]
\label{lemma:ChernoffBound}
Let $X_1,\ldots,X_k$ be independent $\{0,1\}$ random variables. Let $X$ denote the sum of the random variables, $\mu$ the sum's expected value. Then,
\begin{enumerate}
    \item For $0 \leq \delta \leq 1$, $\Pr[X \leq (1-\delta) \mu] \leq \exp(- \delta^2 \mu / 2)$ and $\Pr[X \geq (1+\delta) \mu] \leq \exp(- \delta^2 \mu / 3)$,
    \item For $\delta \geq 1$, $\Pr[X \geq (1+\delta) \mu] \leq \exp(- \delta^2 \mu / (2+\delta))$.
\end{enumerate}
\end{lemma}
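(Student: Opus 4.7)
The plan is to prove each tail bound via the standard exponential-moment (Chernoff) method. For any $t > 0$, apply Markov's inequality to $e^{tX}$: the key step is $\Pr[X \geq (1+\delta)\mu] = \Pr[e^{tX} \geq e^{t(1+\delta)\mu}] \leq e^{-t(1+\delta)\mu}\,\E[e^{tX}]$. By independence of the $X_i$, the moment generating function factors as $\E[e^{tX}] = \prod_i \E[e^{tX_i}]$, and writing $p_i := \Pr[X_i = 1]$ we have $\E[e^{tX_i}] = 1 + p_i(e^t - 1) \leq \exp(p_i(e^t - 1))$ using the elementary inequality $1 + x \leq e^x$. Multiplying gives $\E[e^{tX}] \leq \exp(\mu(e^t - 1))$, which yields the master upper-tail bound $\Pr[X \geq (1+\delta)\mu] \leq \exp\bigl(\mu(e^t - 1) - t(1+\delta)\mu\bigr)$.

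Second, I would optimize $t$ for each of the two upper-tail regimes. Setting $t = \ln(1+\delta)$ minimizes the exponent and yields the sharp form $\Pr[X \geq (1+\delta)\mu] \leq \bigl(e^\delta/(1+\delta)^{1+\delta}\bigr)^\mu$. From this, both claimed bounds follow by an elementary one-variable inequality. For $0 \leq \delta \leq 1$, I would verify $\delta - (1+\delta)\ln(1+\delta) \leq -\delta^2/3$ by letting $f(\delta) := (1+\delta)\ln(1+\delta) - \delta - \delta^2/3$ and checking $f(0) = 0$ together with nonnegativity of $f'$ on $[0,1]$. For $\delta \geq 1$, the analogous claim $\delta - (1+\delta)\ln(1+\delta) \leq -\delta^2/(2+\delta)$ reduces, after clearing the denominator, to a convex inequality that I would verify by checking equality behavior at $\delta = 1$ and monotonicity of the difference for $\delta > 1$.

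Third, for the lower tail I would repeat the argument with $t < 0$ (equivalently, apply Markov to $e^{-tX}$ with $t > 0$). The same factoring and MGF bound give $\Pr[X \leq (1-\delta)\mu] \leq \bigl(e^{-\delta}/(1-\delta)^{1-\delta}\bigr)^\mu$ after choosing $t = -\ln(1-\delta)$. The desired $\exp(-\delta^2 \mu / 2)$ then follows from the (sharper) one-variable inequality $-\delta - (1-\delta)\ln(1-\delta) \leq -\delta^2/2$ on $[0,1]$, which I would verify by comparing term-by-term against the Taylor expansion $\ln(1-\delta) = -\sum_{j \geq 1} \delta^j/j$.

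There is no genuine obstacle: every step is classical. The only mildly delicate pieces are the elementary scalar inequalities that extract the explicit exponents $\delta^2/2$, $\delta^2/3$, and $\delta^2/(2+\delta)$ from the tight master bounds $\bigl(e^{\pm\delta}/(1\pm\delta)^{1\pm\delta}\bigr)^\mu$; these are routine real-analysis exercises on a bounded interval, handled uniformly by the derivative/Taylor-comparison template sketched above.
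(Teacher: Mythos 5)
The paper states this lemma without proof, treating it as a standard concentration inequality; your exponential-moment (MGF plus Markov) derivation is the classical textbook argument and is correct, including the reduction to the tight bound $\bigl(e^{\pm\delta}/(1\pm\delta)^{1\pm\delta}\bigr)^{\mu}$ followed by the three scalar inequalities. The only cosmetic slip is in the $\delta \ge 1$ case: equality in $\delta - (1+\delta)\ln(1+\delta) \le -\delta^2/(2+\delta)$ occurs at $\delta = 0$, not $\delta = 1$ (the inequality is strict at $\delta = 1$), but your base-point-plus-monotonicity template still goes through there.
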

\begin{definition}[Bounded Differences Property]
    A function $f: \mathcal{X} \to \mathbb{R}$ for $\mathcal{X} = \mathcal{X}_1 \times \mathcal{X}_2 \dots \times \mathcal{X}_n$ is said to satisfy the bounded differences property with bounds $c_1, c_2, \dots, c_n \in \mathbb{R}^{+}$ if for all $\overline{x} = (x_1, x_2, \dots, x_n) \in \mathcal{X}$ and all integers $k \in [1, n]$ we have
    $$\sup_{x'_k \in X_k}{|f(\overline{x}) - f(x_1, x_2, \dots, x_{i-1}, x'_k, \dots, x_n)|} \le c_k$$
\end{definition}

\begin{lemma}[Bounded Differences Inequality \cite{mcdiarmid1989method,mcdiarmid1998concentration}]
    \label{lemma:mcdiarmid}
    Let $f: \mathcal{X} \to \mathbb{R}$ satisfy the bounded differences property with bounds $c_1, c_2, \dots, c_n$. Consider independent random variables $X_{1},X_{2},\dots ,X_{n}$ where $X_{k}\in \mathcal {X}_{k}$ for all integers $k \in [1,n]$. Let $\overline{X} = (X_{1},X_{2},\dots ,X_{n})$ and $\mu = \E[f(\overline{X})]$. Then for any $t>0$ we have:
    $$\Pr[|f(\overline{X}) - \mu| \ge t] \le 2 \exp\left(\frac{-t^2}{\sum_{k=1}^n c_k^2}\right)$$
\end{lemma}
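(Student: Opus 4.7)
The plan is to prove this classical concentration inequality via the Doob martingale construction combined with the Azuma--Hoeffding inequality, which is the standard route for McDiarmid-type bounds.

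First, I would define the Doob martingale $(Z_k)_{k=0}^n$ associated with $f(\overline{X})$ by setting $Z_k := \E[f(\overline{X}) \mid X_1, \dots, X_k]$. By the tower property, $Z_0 = \mu$ and $Z_n = f(\overline{X})$, and by construction $(Z_k)$ is a martingale with respect to the filtration generated by $X_1, \dots, X_k$, so $f(\overline{X}) - \mu = \sum_{k=1}^n D_k$ where $D_k := Z_k - Z_{k-1}$ is a martingale difference sequence.

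The key step is to bound the increments $|D_k|$ almost surely by $c_k$. Using independence of the $X_i$'s, one can write $Z_k = g_k(X_1, \dots, X_k)$ where $g_k(x_1, \dots, x_k) := \E_{X_{k+1}, \dots, X_n}[f(x_1, \dots, x_k, X_{k+1}, \dots, X_n)]$, and similarly $Z_{k-1} = \E_{X_k'}[g_k(X_1, \dots, X_{k-1}, X_k')]$. Since the bounded differences property is preserved under averaging over independent variables, the function $g_k$ satisfies $|g_k(\dots, x_k, \dots) - g_k(\dots, x_k', \dots)| \le c_k$ for every fixed choice of the remaining coordinates; hence $|D_k| \le c_k$ almost surely.

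Finally, I would invoke the Azuma--Hoeffding inequality for the martingale $(Z_k)$: for a martingale whose differences satisfy $|D_k| \le c_k$, one has $\Pr[|Z_n - Z_0| \ge t] \le 2\exp\!\bigl(-t^2/(2\sum_k c_k^2)\bigr)$. A sharper form matching the statement of the lemma is obtained by applying Hoeffding's lemma to each $D_k$ conditional on $X_1, \dots, X_{k-1}$, exploiting that $D_k$ is centered and supported in an interval of length at most $2c_k$. The main obstacle is really only the bookkeeping needed to show that $g_k$ inherits the bounded differences property from $f$, which uses independence crucially; once this is in place, the result reduces to a direct application of a well-known martingale concentration bound.
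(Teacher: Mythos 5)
The paper does not prove this lemma at all: it is quoted as a black box from McDiarmid's surveys, so there is no internal proof to compare against. Your Doob-martingale-plus-Azuma/Hoeffding outline is exactly the standard proof from the cited literature, and the structural part of your argument is sound: the martingale $Z_k = \E[f(\overline{X}) \mid X_1,\dots,X_k]$, the telescoping $f(\overline{X}) - \mu = \sum_k D_k$, and the use of independence to write $Z_k = g_k(X_1,\dots,X_k)$ and transfer the bounded-differences property to $g_k$ are all correct and are the real content of the proof.

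There is, however, a quantitative gap in your final step: as written, your argument does not deliver the constant in the exponent that the lemma claims. If you only use that $D_k$ is centered and supported in an interval of length at most $2c_k$ (equivalently $|D_k| \le c_k$), then Hoeffding's lemma gives $\E[e^{\lambda D_k} \mid X_1,\dots,X_{k-1}] \le e^{\lambda^2 (2c_k)^2/8} = e^{\lambda^2 c_k^2/2}$, and the Chernoff argument yields only $\Pr[|f(\overline{X})-\mu| \ge t] \le 2\exp\bigl(-t^2/(2\sum_k c_k^2)\bigr)$ --- the plain Azuma bound, which is weaker than the stated $2\exp\bigl(-t^2/\sum_k c_k^2\bigr)$. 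The point you are missing is that, conditionally on $X_1,\dots,X_{k-1}$, the increment $D_k$ is a function of $X_k$ alone whose \emph{oscillation} (supremum minus infimum over $x_k \in \mathcal{X}_k$) is at most $c_k$, not $2c_k$: indeed $\sup_{x_k} g_k(\dots,x_k) - \inf_{x_k} g_k(\dots,x_k) \le c_k$ by the bounded-differences property of $g_k$ in its last coordinate, and subtracting the constant $Z_{k-1}$ does not change the oscillation. Applying Hoeffding's lemma with interval length $c_k$ gives $\E[e^{\lambda D_k} \mid X_1,\dots,X_{k-1}] \le e^{\lambda^2 c_k^2/8}$ and hence the full McDiarmid bound $2\exp\bigl(-2t^2/\sum_k c_k^2\bigr)$, which implies the (slightly weaker) inequality stated in the lemma. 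So the fix is a one-line sharpening of your support estimate; without it, the proof proves a strictly weaker inequality than the one claimed.
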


\bibliographystyle{plain}
\bibliography{main.bib}

\begin{thebibliography}{10}

\bibitem{ahn2015correlation}
Kook~Jin Ahn, Graham Cormode, Sudipto Guha, Andrew McGregor, and Anthony Wirth.
\newblock {Correlation Clustering in Data Streams}.
\newblock In {\em International Conference on Machine Learning (ICML)}, pages
  2237--2246, 2015.

\bibitem{alon86}
Noga Alon, L\'{a}sl\'{o} Babai, and Alon Itai.
\newblock {A Fast and Simple Randomized Parallel Algorithm for the Maximal
  Independent Set Problem}.
\newblock {\em Journal of Algorithms}, 7(4):567--583, 1986.

\bibitem{AssadiDDISC21}
Sepehr Assadi and Aditi Dudeja.
\newblock {Ruling Sets in Random Order and Adversarial Streams}.
\newblock In Seth Gilbert, editor, {\em 35th International Symposium on
  Distributed Computing (DISC 2021)}, volume 209 of {\em Leibniz International
  Proceedings in Informatics (LIPIcs)}, pages 6:1--6:18, Dagstuhl, Germany,
  2021. Schloss Dagstuhl -- Leibniz-Zentrum f{\"u}r Informatik.

\bibitem{assadi2022rounds}
Sepehr Assadi, Gillat Kol, and Zhijun Zhang.
\newblock Rounds vs communication tradeoffs for maximal independent sets.
\newblock In {\em 2022 IEEE 63rd Annual Symposium on Foundations of Computer
  Science (FOCS)}, pages 1193--1204. IEEE, 2022.

\bibitem{Beame2017}
Paul Beame, Paraschos Koutris, and Dan Suciu.
\newblock {Communication Steps for Parallel Query Processing}.
\newblock {\em J. ACM}, 64(6), 2017.

\bibitem{BehnezhadDH2018}
Soheil Behnezhad, Mahsa Derakhshan, and MohammadTaghi Hajiaghayi.
\newblock Semi-mapreduce meets congested clique, 2018.

\bibitem{Berns2012}
Andrew Berns, James Hegeman, and Sriram~V. Pemmaraju.
\newblock {Super-Fast Distributed Algorithms for Metric Facility Location}.
\newblock In {\em {Proceedings of the International Colloquium on Automata,
  Languages, and Programming (ICALP)}}, pages 428--439, 2012.

\bibitem{cormode2019MISstreaminglb}
Graham Cormode, Jacques Dark, and Christian Konrad.
\newblock {Independent Sets in Vertex-Arrival Streams}.
\newblock In Christel Baier, Ioannis Chatzigiannakis, Paola Flocchini, and
  Stefano Leonardi, editors, {\em 46th International Colloquium on Automata,
  Languages, and Programming (ICALP 2019)}, volume 132 of {\em Leibniz
  International Proceedings in Informatics (LIPIcs)}, pages 45:1--45:14,
  Dagstuhl, Germany, 2019. Schloss Dagstuhl--Leibniz-Zentrum fuer Informatik.

\bibitem{czumajdp20coloring}
Artur Czumaj, Peter Davies, and Merav Parter.
\newblock Simple, deterministic, constant-round coloring in the congested
  clique.
\newblock In {\em Proceedings of the ACM Symposium on Principles of Distributed
  Computing (PODC)}, PODC '20, page 309–318, New York, NY, USA, 2020.
  Association for Computing Machinery.

\bibitem{Feigenbaum2005}
Joan Feigenbaum, Sampath Kannan, Andrew McGregor, Siddharth Suri, and Jian
  Zhang.
\newblock {Graph Distances in the Streaming Model: The Value of Space}.
\newblock In {\em the Proceedings of the Sixteenth Annual ACM-SIAM Symposium on
  Discrete Algorithms (SODA)}, pages 745--754, 2005.

\bibitem{Feigenbaum2005a}
Joan Feigenbaum, Sampath Kannan, Andrew McGregor, Siddharth Suri, and Jian
  Zhang.
\newblock {On Graph Problems in a Semi-Streaming Model}.
\newblock {\em Theor. Comput. Sci.}, 348(2):207--216, 2005.

\bibitem{ghaffari2016improved}
Mohsen Ghaffari.
\newblock An improved distributed algorithm for maximal independent set.
\newblock In {\em Proceedings of the twenty-seventh annual ACM-SIAM symposium
  on Discrete algorithms}, pages 270--277. SIAM, 2016.

\bibitem{Ghaffari2017}
Mohsen Ghaffari.
\newblock {Distributed MIS via All-to-All Communication}.
\newblock In {\em {Proceedings of the ACM Symposium on Principles of
  Distributed Computing (PODC)}}, pages 141--149, 2017.

\bibitem{ghaffari2018improved}
Mohsen Ghaffari, Themis Gouleakis, Christian Konrad, Slobodan Mitrovi{\'c}, and
  Ronitt Rubinfeld.
\newblock {Improved Massively Parallel Computation Algorithms for MIS,
  Matching, and Vertex Cover}.
\newblock In {\em the Proceedings of the ACM Symposium on Principles of
  Distributed Computing (PODC)}, pages 129--138, 2018.

\bibitem{Goodrich2011}
Michael~T. Goodrich, Nodari Sitchinava, and Qin Zhang.
\newblock {Sorting, Searching, and Simulation in the Mapreduce Framework}.
\newblock In {\em {Proceedings of the International Symposium on Algorithms and
  Computation (ISAAC)}}, pages 374--–383, 2011.

\bibitem{Hegeman2014}
James Hegeman, Sriram Pemmaraju, and Vivek Sardeshmukh.
\newblock {Near-Constant-Time Distributed Algorithms on a Congested Clique}.
\newblock In {\em {Proceedings of the International Symposium on Distributed
  Computing (DISC)}}, pages 514--530, 2014.

\bibitem{HegemanPTCS2015}
James~W. Hegeman and Sriram~V. Pemmaraju.
\newblock Lessons from the congested clique applied to mapreduce.
\newblock {\em Theoretical Computer Science}, 608:268--281, 2015.
\newblock Structural Information and Communication Complexity.

\bibitem{inamdar2018}
Tanmay Inamdar, Shreyas Pai, and Sriram~V. Pemmaraju.
\newblock {Large-Scale Distributed Algorithms for Facility Location with
  Outliers}.
\newblock In Jiannong Cao, Faith Ellen, Luis Rodrigues, and Bernardo Ferreira,
  editors, {\em 22nd International Conference on Principles of Distributed
  Systems (OPODIS 2018)}, volume 125 of {\em Leibniz International Proceedings
  in Informatics (LIPIcs)}, pages 5:1--5:16, Dagstuhl, Germany, 2018. Schloss
  Dagstuhl--Leibniz-Zentrum fuer Informatik.

\bibitem{Karloff2010}
Howard Karloff, Siddharth Suri, and Sergei Vassilvitskii.
\newblock {A Model of Computation for MapReduce}.
\newblock In {\em the Proceedings of the Twenty-First Annual ACM-SIAM Symposium
  on Discrete Algorithms (SODA)}, pages 938--948, 2010.

\bibitem{konrad2019complexity}
Christian Konrad, Sriram~V Pemmaraju, Talal Riaz, and Peter Robinson.
\newblock The complexity of symmetry breaking in massive graphs.
\newblock In {\em 33rd International Symposium on Distributed Computing (DISC
  2019)}. Schloss Dagstuhl-Leibniz-Zentrum fuer Informatik, 2019.

\bibitem{Lenzen2013}
Christoph Lenzen.
\newblock Optimal deterministic routing and sorting on the congested clique.
\newblock In {\em Proceedings of the 2013 ACM Symposium on Principles of
  Distributed Computing}, page 42–50. Association for Computing Machinery,
  2013.

\bibitem{LotkerPPP2003CongClique}
Zvi Lotker, Elan Pavlov, Boaz Patt-Shamir, and David Peleg.
\newblock Mst construction in o(log log n) communication rounds.
\newblock In {\em Proceedings of the Fifteenth Annual ACM Symposium on Parallel
  Algorithms and Architectures}, page 94–100. Association for Computing
  Machinery, 2003.

\bibitem{luby86}
M.~Luby.
\newblock {A Simple Parallel Algorithm for the Maximal Independent Set
  Problem}.
\newblock {\em SIAM Journal on Computing}, 15:1036--1053, 1986.

\bibitem{mcdiarmid1998concentration}
Colin McDiarmid.
\newblock {\em Concentration}, pages 195--248.
\newblock Springer Berlin Heidelberg, Berlin, Heidelberg, 1998.

\bibitem{mcdiarmid1989method}
Colin McDiarmid et~al.
\newblock On the method of bounded differences.
\newblock {\em Surveys in combinatorics}, 141(1):148--188, 1989.

\bibitem{Mut2005}
S.~Muthukrishnan.
\newblock {Data Streams: Algorithms and Applications}.
\newblock {\em Theoretical Computer Science}, 1(2):117--236, 2005.

\bibitem{nowicki2021mst}
Krzysztof Nowicki.
\newblock A deterministic algorithm for the mst problem in constant rounds of
  congested clique.
\newblock In {\em {Proceedings of the ACM Symposium on Theory of Computing
  (STOC)}}, STOC 2021, page 1154–1165, New York, NY, USA, 2021. Association
  for Computing Machinery.

\bibitem{Pai2022}
Shreyas Pai and Sriram~V. Pemmaraju.
\newblock {Brief Announcement: Deterministic Massively Parallel Algorithms for
  Ruling Sets}.
\newblock In {\em {Proceedings of the ACM Symposium on Principles of
  Distributed Computing (PODC)}}, pages 366--368, 2022.

\end{thebibliography}


\end{document}